% This is LLNCS.DEM the demonstration file of
% the LaTeX macro package from Springer-Verlag
% for Lecture Notes in Computer Science,
% version 2.3 for LaTeX2e
%
\documentclass{llncs}
\usepackage{makeidx}  % allows for indexgeneration
\usepackage[leqno]{amsmath}
\usepackage{amssymb}
\usepackage{amsfonts}
\usepackage{amsmath}
\usepackage{stmaryrd}
\usepackage{epsfig}
\usepackage{color}
\usepackage{colordvi}
\usepackage{ifthen} % para TODOs
\usepackage{proof}

%% FLAGS %%%%%%%%%%%%%%%%%%%%%%%%%%%%%%%%%%%%%%%%%%
\newcommand{\todoFlag}{ON} % si vale ON aparecen los TODO's en cajas, sino desaparecen
\newcommand{\dudaFlag}{ON} %

\newcommand{\nc}{\newcommand}
\nc{\obs}{\mathcal{O}} % use $\obs$
\nc{\prog}{\mathcal{P}}
\nc{\den}[1]{[\![#1]\!]}  % use $\den{e}$

\newcommand{\denn}[1]{[\![\![#1]\!]\!]}

\nc{\ordSus}{\lesssim} % orden de subsuncion de sustituciones
\nc{\ordSuse}{\precnsim} % orden de subsuncion de sustituciones estricto
\nc{\ordPos}{\lesssim} % orden de subsuncion de posiciones
\nc{\ordap}{\sqsubseteq} % orden de aproximacion de expresiones parciales de crwl
\nc{\rw}{\to} % traditional rewriting
\nc{\tor}{\to} % flecha de (l -> r) de las reglas de programa
\nc{\crwlto}{\rightarrowtriangle}  % CRWL-approximation e -> t
\nc{\clto}{\crwlto}                % CRWL-approximation e -> t
\nc{\conscrwl}{\vdash_{\it CRWL}}                % CRWL-derivability
 % version de reescritura de CRWL
\nc{\clp}{{\cal P} \vdash_{CRWL^{+}}} % variacion de CRWL,  CRWL^+
% \nc{\denp}[1]{\den{#1}^+} % denotacion de CRWL^+
\nc{\cldt}{{\cal P} \vdash_{CRWL^{d}}} % variacion de CRWL, CRWL^d
\nc{\dend}[1]{\den{#1}^d} % denotacion de CRWL^d
\nc{\dg}[1]{\den{#1}_{CRWL}}
\nc{\cl}{{\cal P} \vdash_{CRWL}}
\nc{\clho}{{\cal P} \vdash_{HOCRWL}}
\nc{\gl}{{\cal P} \vdash_{CRWL_{let}}}
\nc{\dgl}[1]{\den{#1}_{CRWL_{let}}}
\nc{\dcl}[1]{\dgl{#1}}
\nc{\ddgl}[1]{\denn{#1}_{CRWL_{let}}}
\nc{\ddcl}[1]{\ddgl{#1}}
%\nc{\f}{\to_{\tiny l}}    %let-rewriting
%\nc{\f}{\to_{l}}    %let-rewriting
\nc{\f}{{\to^{l}}}    %let-rewriting
\nc{\fnf}{\to_{lnf}}    %let-rewriting sin Fapp
\nc{\fd}{\to_{lw}}    %let-rewriting con swap
%\nc{\fnt}{\to_{l_{\infty}}} % let rewriting no terminante con (LetIn')
\nc{\fnt}{{\to^{L}}} % let rewriting no terminante con (LetIn')
\nc{\nr}{\leadsto} % narrowing clasico
% \nc{\fnr}{{\leadsto^l}}  % let-rewriting  con narrowing
% \nc{\fnre}{\leadsto^{l^*}}  % let-rewriting  con narrowing, varios pasos
% \nc{\fnrc}[1]{\leadsto^{l^{#1}}}  % let-rewriting  con narrowing, varios pasos
\nc{\fnr}{{\leadsto^L}}  % let-rewriting  con narrowing
% \nc{\fnre}{\leadsto^{L^*}}  % let-rewriting  con narrowing, varios pasos
\nc{\fnre}{\leadsto^{l^*}}  % let-rewriting  con narrowing, varios pasos, lo he cambiado a peque
\nc{\fnrc}[1]{\leadsto^{L^{#1}}}  % let-rewriting  con narrowing, varios pasos
\nc{\fnrl}{{\leadsto^l}}  % let-rewriting  con narrowing, relacion mas pequenia

\nc{\var}{{\cal V}}
\nc{\ra}{\tor}
\nc{\leqhyp}{\Subset}
\nc{\tot}[1]{{#1}^\tau}
\nc{\tlr}[1]{\widehat{#1}} % transformacion de LExp a Exp
\nc{\jn}{\Join} % lazo de la join
\nc{\tr}{\underline{\mbox{\textbf{t}}}}
\nc{\slt}{\hookleftarrow} % sharing let
\nc{\clt}{\hookrightarrow} % copy let
\nc{\con}{{\cal C}}  %  contexto
\nc{\cnn}[1]{\con[#1]}  % argumento en contexto
\nc{\cnnp}[1]{\con'[#1]}  % argumento en contexto'
\nc{\nat}{{\mathbb N}}
% JUAN
% \nc{\prog}{{\cal P}} % un programa cualquiera
\nc{\progh}{\hat{\prog}} % un programa cualquiera transformado
\nc{\progm}{{\cal M}} % un programa
\nc{\progr}{{\cal R}} % un programa
\nc{\ps}{\vdash}    % prueba en crwl
% \nc{\pps}{\vdash_{{\it CRWL}^{pl}}}   % prueba en plural semantics
\nc{\pps}{\vdash_{\pi CRWL}}    % prueba en plural semantics
% \nc{\pcrwl}{{\it CRWL}$^{pl}$} % plural semantics
\nc{\pcrwl}{{\it $\pi$CRWL}} % plural semantics
\nc{\crwl}{{\it CRWL}}
\nc{\denp}[1]{\den{#1}^{pl}} % denotacion en plural semantics
% \nc{\denp}[1]{\den{#1}^{\pi}} % denotacion en plural semantics
\nc{\dens}[1]{\den{#1}^{sg}} % denotacion en CRWL, explicitando que es singular
\nc{\denr}[1]{\den{#1}^{rw}} % denotacion en reescritura
\nc{\pst}[1]{pST(#1)} % plural semantics transformation
\nc{\icsus}{CSubst_\perp^?}% csustituciones con interrogacion
\nc{\icterm}{CTerm^?} % cterminos con interrogacion
\nc{\ppop}{\rightarrowtail} % flecha popeye de plural
% \nc{\pordap}{\ordap_{M}} % orden de aproximacion estilo multicjtos para \icsus
\nc{\pordap}{\ordap_{\pi}} % orden de aproximacion estilo multicjtos para \icsus
\nc{\dsord}{\unlhd} % orden entre las denotaciones de las sustituciones
\nc{\cltop}{\clto} % flecha de plural semantics
% \nc{\denpp}[1]{\denp{#1}_{M}} % denotacion plural que devuelve subconjtos de valores
\nc{\dsordp}{\dsord_{M}} % orden entre las denotaciones de las sustituciones con cjtos de valores
\nc{\pl}{\pi } % simbolo de pluralidad, que uso por ej para las s-sustituciones
\nc{\cjto}{{\cal A}} % simbolo para referirse a un cjto cualquiera
\nc{\cjtoD}{{\cal S}} % simbolo para referirse a un cjto cualquiera
\nc{\vran}{vran}
\nc{\pos}[1]{{\cal O}(#1)} % conjuntos de posiciones en un termino
\nc{\posV}[1]{\tilde{{\cal O}}(#1)} % conjuntos de posiciones de variables en un termino

% Juan
\nc{\ds}{{\cal D}} % Conjunto de las denotaciones = partes de CTerm_{\perp}
\nc{\hds}{{\cal HD}} % Conjunto de las hyperdenotaciones
% \nc{\denn}{\hden} % hipersemantica
% \nc{\leqhyp}{\Subset}
\nc{\ohs}{\leqhyp} % orden de las hipersemantica
\nc{\hde}{\varphi} % elemento del conjunto de las hyperdenotaciones
\nc{\de}{\delta} % elemento del conjunto de las denotaciones
\nc{\uhs}{{\ \Cup\ }} % hiper union de hipersemanticas
% \nc{\Uhs}{{\ \Cup^\Cup\ }} % hiper union grande de hipersemanticas
\nc{\Uhs}{{\Cup^\Cup}} % hiper union grande de hipersemanticas
% \nc{\dennD}[1]{\triangleleft\!\!\! \parallel \!\!{#1}\!\!\parallel \!\!\!\triangleright} % descomposicion de la hipersemantica
\nc{\dennD}[1]{{\triangleleft\!\! \parallel \!{#1}\!\parallel \!\!\triangleright}} % descomposicion de la hipersemantica
\nc{\denD}[1]{{\triangleleft \!|{#1}| \!\triangleright}} % descomposicion de la semantica
\nc{\hspm}{hspm~} % funcion para construir hipersemanticas con pattern matching

\nc{\obsig}{{\obs_{\Theta}}} % Observable de variables uno
\nc{\obsigp}[1]{{\obs^{{#1}}_{\Theta}}} % Observable de variables uno, bajo un programa
\nc{\obsigfo}{{\obs_{{fo}_{\Theta}}}} % Observable de variables FO
\nc{\obsigfop}[1]{{\obs^{{#1}}_{{fo}_{\Theta}}}} % Observable de variables FO, bajo un programa
% \nc{\obsfo}{\obs'}
\nc{\obsfo}{{\obs_{fo}}}
\nc{\obsfop}[1]{{\obs^{{#1}}_{fo}}}
\nc{\ppp}{{\cal P}}
\nc{\fs}{{FS}}

%   TODO
\nc{\todoLong}[1]{\ifthenelse{\equal{\todoFlag}{ON}}{~\\{\fcolorbox{red}{yellow}{\begin{minipage}{.985\textwidth}{#1}\end{minipage}}}}{}
} % para TODO's, depende de \todoFlag
% \nc{\todo}[1]{\ifthenelse{\equal{\todoFlag}{ON}}{\fcolorbox{red}{yellow}{#1}}{}} % para TODO's, depende de \todoFlag
\nc{\todo}[1]{\ifthenelse{\equal{\todoFlag}{ON}}{\textcolor{red} {#1}}{}} % para TODO's, depende de \todoFlag
\nc{\todobis}[2]{\ifthenelse{\equal{\todoFlag}{ON}}{~\\{\fcolorbox{red}{#1}{\begin{minipage}{.985\textwidth}{#2}\end{minipage}}}}{}
} % \todobis{color}{contenido}para TODO's, depende de \todoFlag
\nc{\duda}[1]{\ifthenelse{\equal{\dudaFlag}{ON}}{~\\{\fcolorbox{yellow}{Azure}{\begin{minipage}{.985\textwidth}{#1}\end{minipage}}}}{}
}% cosas dudosas

\begin{document}
%%%%%%%%%%%%%%%%
% \input{frontmatter}
\frontmatter
\pagestyle{headings}  % switches on printing of running heads
% \addtocmark{Declarative Debugging Datalog Programs} % additional mark in the TOC
%
\mainmatter              % start of the contributions
\title{The Full Abstraction Problem for Higher Order  Functional-Logic Programs}
\titlerunning{The Full Abstraction Problem ...}  % abbreviated title (for running head)
%                                     also used for the TOC unless
%                                     \toctitle is used
%
% \author{R. Caballero\inst{1} \and F.J. L\'opez-Fraguas\inst{1} and J. Rodr\'iguez-Hortal\'a\inst{1}
\author{F.J. L\'opez-Fraguas and J. Rodr\'iguez-Hortal\'a
        \thanks{This work has been partially supported by the Spanish
 projects TIN2005-09207-C03-03, TIN2008-06622-C03-01, S-0505/TIC/0407
 and UCM-BSCH-GR58/08-910502.}}
%
% \authorrunning{R. Caballero et al.}   % abbreviated author list (for running head)
%
%%%% list of authors for the TOC (use if author list has to be modified)
%\tocauthor{R. Caballero, F.J. L\'opez-Fraguas }
%
\institute{Departamento de Sistemas Inform{\'a}ticos y
Computaci{\'o}n \\
%Universidad Complutense de Madrid, Spain\\
% \email{rafa@sip.ucm.es}, \email{fraguas@fdi.ucm.es}, \email{juanrh@fdi.ucm.es}}
\email{fraguas@fdi.ucm.es}, \email{juanrh@fdi.ucm.es}}

\maketitle              % typeset the title of the contribution

%%%%%%%%%%%%%%%%%%%%%%%%%%%%%%%%
% \input{abstract}
\begin{abstract}
Developing suitable formal semantics can be of great  help in the
understanding, design and implementation of a programming language,
and act as a guide for software development tools like analyzers or
partial evaluators. In this sense, full abstraction is a highly
desirable property, indicating a perfect correspondence between the
semantics and  the observable behavior of program pieces. In this
work we address the question of full abstraction for the family of
modern functional logic languages, in which functions can be  higher
order and non-deterministic, and where the semantics adopted for
non-determinism is \emph{call-time choice}. We show that, with
respect to natural notions of \emph{observation}, any semantics
based on \emph{extensional} functions is necessarily unsound; in
contrast, we show that the higher order version of \emph{CRWL}, a
well-known existing semantic framework for functional logic
programming, based on  an \emph{intensional} view of functions,
turns out to be fully abstract and compositional.
% Moreover, we extend the results to the case of observation notions that give an active role to the variables in an expression, by considering the substitutions that could be applied to them.
\end{abstract}
%%%%%%%%%%%%%%%%%%%%%%%%%%%%%%%%
%%%%%%%%%%%%%%%%%%%%%%%%%%%%%%%%
%%%%%%%%%%%%%%%%%%%%%%%%%%%%%%%%%%%%%%%%%%%%%%%%5
\section{Introduction}\label{sec:intro}
Developing suitable formal semantics can  be of great help in the
understanding, design and implementation of a programming language,
and acts as a guide for software development tools like analyzers or
partial evaluators. In this sense, \emph{full abstraction} is a
highly desirable property, indicating a perfect correspondence
between the semantics and the behavior of program pieces, according
to a given criterion of \emph{observation}.

The notion of full abstraction was introduced by Plotkin
\cite{Plotkin77} in connection to PCF, a simple model of functional
programming based on $\lambda$-calculus.  He realized that the
standard Scott semantics, in which expressions of functional types
have classical mathematical functions as meanings, lacks full
abstraction with respect to observing the value obtained in the
evaluation of an expression. The reason lays in the impossibility of
defining the function \emph{por} (\emph{parallel or}) in PCF.
%, an inherently sequential language, as $\lambda$-calculus is.
Using this
fact one can build two higher order (HO) expressions $e_1,e_2$
denoting two different mathematical  functions
$\varphi_1,\varphi_2$,
both expecting boolean functions as arguments,
such that $\varphi_1,\varphi_2$ only differ when applied
to \emph{por} as argument. Therefore $e_1,e_2$ have different Scott
semantics but this difference cannot be \emph{observed}. It is
usually said that the semantics is \emph{ too concrete}. Notice,
however, that Scott semantics for PCF is \emph{sound}, that is, if
two expressions have the same semantics, they cannot be observably
distinguished. Unsoundness of a semantics can be considered a flaw,
much more severe that being too concrete, which is more a
weakness than a flaw.

Full abstraction for PCF was achieved in different technical ways
(see e.g. \cite{BerryCL86}). But for our purposes it is more interesting to
recall that the Scott semantics becomes fully abstract if PCF is
enriched with the `missing' \emph{ por} function (see e.g. \cite{Mitchell96}).
The mainstream of functional logic programming (FLP, see \cite{Hanus07ICLP}) is
based rather in the theory of term rewriting systems than in
$\lambda$-calculus; a consequence is that parallel or can be defined
straightforwardly  by an overlapping (almost orthogonal) rewriting
system. So one could think of assigning to FLP languages a
denotational semantics in the FP style. For instance, for a
definition like $f~x=0$, one could assign to $f$ the meaning
$\lambda x.0$.
The next step of our discussion is taking into account that modern
FLP languages like Curry \cite{Han06curry} or Toy \cite{LS99}
also permit non-confluent and non-terminating programs that define
non-deterministic non-strict functions. This suggests that the standard
semantics should be modified in the sense that the meaning of a
function would be some kind of set-valued function.
%It might seem
%that  an appropriate choice of \emph{powerdomain} could give the
%adequate technical support to this point of view.

The starting motivation of this paper is that \emph{this roadmap
cannot be followed anymore when non-determinism is combined with
HO}, at least when considering \emph{call-time choice} \cite{hussmann93,GHLR99}, which is the
notion of non-determinism adopted in, e.g., Curry or Toy. The following
example taken from \cite{LRSflops08} shows it:

\begin{example}\label{fdouble}
The following program computes with natural numbers represented by
the constructors $0$ and $s$, and where $+$ is defined as usual. The
syntax uses HO curried notation.
\begin{verbatim}
     g X -> 0         f -> g          f' X -> f X
     h X -> s 0       f -> h

     fadd F G X -> (F X) + (G X)      fdouble F -> fadd F F
\end{verbatim}
Here $f$ and $f'$ are non-deterministic functions that are (by
definition of $f'$) extensionally equivalent. In a set-valued variant of Scott semantics,
their common denotation would be the function $\lambda
X.\{0,s ~0\}$, or something essentially equivalent. But this leads to
unsoundness of the semantics. To see why, consider  the expressions
\emph{(fdouble f 0)} and \emph{(fdouble f' 0)}. In Curry  or Toy,
the possible values for \emph{(fdouble f 0)} are\emph{ 0, s (s 0)},
while \emph{(fdouble f' 0)} can be in addition reduced to  \emph{s
0}. The operational reason to this situation is that \emph{fdouble f 0} is rewritten first to
\emph{fadd f f 0} and then to \emph{f 0 + f 0};  now,  call-time
choice enforces that  evaluation of the two created
copies of $f$ (which is an evaluable expression) must be shared.  In
the case of \emph{f' 0 + f' 0}, since $f'$ is a normal form, the two
occurrences of \emph{f' 0} evolve independently. We see then that
$f$ and $f'$ can be put in a context able to distinguish them, implying
that any semantics assigning $f$ and $f'$ the same denotation is
necessarily unsound, and therefore not fully abstract.
\end{example}

The combination \emph{HO + Non-determinism + call-time choice} was
addressed in \emph{HOCRWL} \cite{GHR97,GHR99}, an extension to HO of
\emph{CRWL}  \cite{GHLR99}, a semantic FO framework specifically
devised for FLP with call-time choice semantics for non-determinism.
\emph{HOCRWL} adopts an \emph{intensional} view of functions, where
different descriptions --in the form of \emph{HO-patterns}-- of the
same extensional function are distinguished as different data.
The intensional point of view of {\it HOCRWL} was an \emph{a priori} design decision,
motivated by the desire of
achieving enough power for HO programming  while avoiding the
complexity of higher-order unification of $\lambda$-terms modulo
$\beta\eta$, followed in other approaches \cite{Miller91,HanPre99}. The issues of
soundness or full abstraction were not the (explicit nor implicit) concerns of \cite{GHR97,GHR99};
whether {\it HOCRWL} actually fulfils those properties or not is
exactly the question considered in this paper.
As we will get positive answers, an anticipated conclusion of our
work is that one must take into account
intensional descriptions of functions
%intentions
as  sensible meanings of expressions in HO non-deterministic FLP programs,
even if one does not want to explicitly program with HO-patterns% (a feature offered by HOCRWL, but unpleasant to some people)
.

The rest of the paper is organized as follows.
Next section recalls some essential preliminaries about applicative HO rewrite systems
and the {\it HOCRWL} framework. We introduce also some terminology about semantics
and extensionality needed for Sect. \ref{sec:fully}, where we examine soundness and full abstraction
with respect to reasonable  notions of observation based on the result of reductions.
The section ends with a discussion of the problems encountered when programs have {\em extra} variables, i.e., variables occuring in right, but not in
left-hand sides of function defining rules.
%We start Sect. \ref{sec:variables} by discussing the limitations of the results obtained so far,
%in relation to the role of variables in expressions an programs, and improve our results
%in this direction with new powerful semantics notions and full abstraction
%results. In addition, we apply all this to recover particular (but frequent in practice)
%cases where reasoning in extensional terms is still valid.
Finally Sect. \ref{sec:conclusions} summarizes some conclusions and future work.

\section{Higher-Order Functional-Logic Programs}\label{sec:crwl}
\subsection{Expressions, patterns and programs}
We consider \emph{function} symbols $f,g, \ldots \in FS$, \emph{constructor} symbols
$c,d,\ldots \in CS$, and \emph{variables} $X,Y,\ldots \in {\cal V}$; each $h \in FS \cup CS$ has an associated \emph{arity}, $ar(h) \in \nat$; ${FS}^n$ (resp. ${CS}^n$) is the set of function (resp. constructor) symbols with arity $n$.
The notation $\overline{o}$ stands for tuples
of any kind of syntactic objects $o$.
The set of \emph{applicative expressions} is defined by
$Exp \ni e ::= X~|~h~|~(e_1\ e_2)~$.
As usual, application is left associative and outer parentheses  can be omitted,
so that $e_1\ e_2\ \ldots e_n$ stands for $((\ldots(e_1\ e_2)\ldots)\ e_n)$.
The set of variables occurring in $e$ is written by $var(e)$.
A distinguished set of expressions is that of \emph{patterns} $t,s \in Pat$, defined by:
$t ::  = X~|~c~t_1 \ldots t_n~|~f~t_1 \ldots t_m $,
where $0 \leq n \leq ar(c), 0 \leq m < ar(f)$.
Patterns are irreducible expressions playing the role of \emph{values}.
\emph{FO-patterns}, defined by $\mathit{FOPat} \ni t ::= X~|~c\ t_1\ldots t_n$ $(n=ar(c))$,
correspond to FO constructor terms, representing ordinary non-functional data-values.
%The rest of patterns are HO-patterns and would have, in a typed setting, a type containing
%$\rightarrow$. In particular,
Partial applications of symbols $h\in FS \cup CS$ to other patterns are HO-patterns and can be seen
as truly data-values representing functions from an \emph{intensional} point of view.
Examples of patterns with the signature of Ex. \ref{fdouble} are: \emph{0, s X, s, f', fadd f' f'}.
The last three are HO-patterns. Notice that \emph{f, fadd f f} are not patterns since $f$ is not a
pattern ($ar(f)=0$).

% Expressions $X~e_1 \ldots e_m$ ($m \geq 0$) are called  \emph{flexible}
% (\emph{variable application} when $m > 0$).
% \emph{Rigid} expressions  have the form $h~e_1 \ldots e_m$; moreover, they are \emph{junk} if $h\in CS^n$ and $m>n$,
%  \emph{active} if $h\in FS^n$ and $m\geq n$, and \emph{passive} otherwise.
%
\emph{Contexts}  are expressions with a hole defined as $Cntxt \ni
\con ::= [\ ]~|~\con\ e~|~e\ \con$.  Application of $\con$ to $e$
(written $\con[e]$) is defined by $[\ ][e] = e\ ;\ (\con\ e')[e] =
\con[e]\ e'\ ;\ (e'\ \con)[e] = e'\ \con[e]$. Substitutions $\theta
\in Subst$ are finite mappings from variables to expressions;
$[X_i/e_i, \ldots, X_n/e_n]$ is the substitution which assigns $e_i
\in Exp$ to the corresponding $X_i \in {\cal V}$. We will mostly use
\emph{pattern-substitutions} (or simply \emph{p-substitutions})
$PSubst = \{ \theta\in Subst~|~\theta(X) \in Pat, \forall X\in{\cal V}\}$. %We write $\epsilon$
% for the identity substitution, $dom(\theta)$ for the domain of $\theta$, and
% $vRan(\theta) = \bigcup_{X\in dom(\theta)}var(X\theta)$.

As usual while describing semantics of non-strict languages, we enlarge the signature
with a new 0-ary constructor symbol $\perp$, which can be used to build the sets
$Expr_\perp,Pat_\perp,PSubst_\perp$ of \emph{partial} expressions, patterns and p-substitutions resp.
%Partial expressions are ordered by the {\em approximation} ordering $\sqsubseteq$ defined as the least
%partial ordering satisfying $\perp \sqsubseteq e$ and $e \sqsubseteq e'
%\Rightarrow {\cal C}[e] \sqsubseteq {\cal C}[e']$
%for all $e,e' \in Exp_\perp, {\cal C} \in {\it Cntxt}$. %FUERA This partial ordering can be extended to substitutions: given $\theta,\sigma\in Subst_\bot$ we say $\theta\sqsubseteq\sigma$ if $X\theta\sqsubseteq X\sigma$ for all $X\in\var$.

% A \emph{HOCRWL}-program (or simply a \emph{program}) consists of one or more \emph{program rules} for each $f \in FS^n$,
% having the form
% $f~t_1 \ldots t_n \tor r$ where $(t_1, \ldots, t_n)$ is a linear (i.e. variables occur only once) tuple of (maybe HO) patterns
% and $r$ is any expression.
A \emph{HOCRWL}-program (or simply a \emph{program}) consists of one
or more \emph{program rules} of  the form $f~t_1 \ldots t_n \tor r$
where $f \in FS^n$, $(t_1, \ldots, t_n)$ is a linear (i.e. variables
occur only once) tuple of (maybe HO) patterns and $r$ is any
expression.
Notice that confluence or termination is not required.
% , and that $r$ may have variables not occurring in $f~t_1 \ldots t_n$.
%FUERA (we write $vExtra(R)$ for such variables in a rule $R$).
In the present work we restrict ourselves to programs not containing {\it extra
  variables}, i.e., programs for which $var(r) \subseteq var(f~\overline{t})$ holds for any program rule.
There are technical reasons for such limitation (see Sect.
\ref{sec:discussion}), whose practical impact is on the other hand
mitigated by known extra-variables elimination techniques
\cite{DiosLopezProle06,AntoyHanus06}.
%\todo{The possible
%extension of this work to rules with extra variables will be
%discussed in Sect. \ref{sec:discussion}.}
%
\emph{HOCRWL}-programs often allow also {\em conditions} in the
program rules. However, % under {\it HOCRWL} %the semantics presented in next section
 programs with conditions can be transformed
into equivalent programs without conditions; therefore we consider
only unconditional rules.
% The original {\it HOCRWL}\/ logic  considered also
%  \emph{joinability}  conditions in rules to achieve a better treatment of
% strict equality as built-in, which is a subject orthogonal to the aims of this paper.
% Therefore, we consider only unconditional rules.

Some FLP systems, like Curry, do not  allow HO-patterns in left-hand sides of function definitions. We call \emph{left-FO} programs to these special kind of \emph{HOCRWL}-programs. We remark that all the notions and results in the paper are applicable to \emph{left-FO} programs and we stress the fact that Ex. \ref{fdouble} is one of them.

\subsection{The \emph{HOCRWL} proof calculus \cite{GHR97}}

The semantics of a program ${\cal P}$ is determined in {\it HOCRWL} by means of a proof calculus able to derive
reduction statements of the form $e \rightarrowtriangle t$, with $e \in Exp_\perp$ and $t \in Pat_\perp$,
meaning informally that $t$ is (or approximates to) a possible value of $e$, obtained by
evaluation of $e$ using ${\cal P}$ under call-time choice. %FUERA Besides this logical semantics, {\it HOCRWL} programs come in \cite{GHR97} with a model-theoretic semantics based on applicative algebras, with existence of a least Herbrand model. We will not use this aspect of the semantics here.

The {\it HOCRWL}-proof calculus is presented in Fig. \ref{fig:hocrwl}.
We write $\clho e \crwlto t$ to express that $e\crwlto t$ is derivable in that calculus using
the program ${\cal P}$. The \emph{HOCRWL-denotation} of an expression $e \in Exp_\perp$ is
defined as $\den{e}^{\mathcal{P}}_{\it HOCRWL}=\{t\in Pat_\perp \mid \clho e\crwlto t\}$.
$\cal P$ and {\it HOCRWL} are frequently omitted in those notations.

\begin{figure}[h!]
  \centering
  {\small
    \begin{tabular}{|l|}
      \hline
      {\bf (B)}$\ $
      $\begin{array}{c}
        \\
        \hline
        e\crwlto\bot
      \end{array}$
      \qquad\qquad
      {\bf (RR)}$\ $
      $\begin{array}{c}
        \\
        \hline
        x\crwlto x
      \end{array}$ $\qquad  x\in \var$\\[0.5cm]

      {\bf (DC)}$\ $
      $\begin{array}{c}
        e_1\crwlto t_1\ \ldots\ e_n\crwlto t_m\\
        \hline
        h~e_1 \ldots e_m \crwlto h~t_1 \ldots t_m
      \end{array}$ $\qquad h\in \Sigma, \mbox{ if } h~t_1 \ldots t_m \mbox{ is a partial pattern, } m \geq 0$\\[0.5cm]

      {\bf (OR)}$\ $
      $\begin{array}{c}
%         e_1\crwlto t_1 \ldots\ e_n\crwlto t_n~~r~a_1 \ldots a_m \crwlto t\\
       e_1\crwlto p_1\theta \ldots\ e_n\crwlto p_n\theta~~~r\theta~a_1 \ldots a_m \crwlto t\\
        \hline
        f~e_1 \ldots e_n~a_1 \ldots a_m \crwlto t
      \end{array}$ \qquad
      $ \begin{array}{l}
\mbox{ if } m \geq 0, \theta \in PSubst_\perp \\
(f~p_1\ldots p_n \tor r) \in  \prog
\end{array}$\\[0.5cm]
%         (f~t_1\ldots t_n \tor r) \in [\mathcal{P}]_{\perp}$\\[0.5cm]
      \hline
    \end{tabular}
    }
    \caption{({\it HOCRWL}-calculus)}%\footnote{\emph{G}oal-\emph{O}riented \emph{R}ewriting \emph{C}alculus})}
    \label{fig:hocrwl}
\end{figure}

Looking at in Ex. \ref{fdouble} we have $\den{fdouble\ f\ 0} = \{0,s\ (s\ 0),\perp,s\ \perp,s\ (s\ \perp)\}$
and $\den{fdouble\ f'\ 0} = \{0,s\ 0,s\ (s\ 0),\perp,s\ \perp,s\ (s\ \perp)\}$.

We will use the following  result stating an important compositionality property
of the semantics of \emph{HOCRWL}-expressions: the semantics of a whole expression depends only on the
semantics of its constituents, in a particular form reflecting the idea of call-time choice.
%FUERA The second part of the theorem is a technical result, needed in some proofs, concerning the size of the  involved derivations.

\begin{theorem}[Compositionality of \emph{HOCRWL} semantics, \cite{LRSflops08}]\label{lemacompos}\label{LCompAp}
For any $e \in Exp_\perp$, $\con \in Cntxt$, $\den{\cnn{e}} = \bigcup_{t\in\den{e}}\den{\cnn{t}}$.
% \begin{itemize}
% \item[(i)] $\den{\cnn{e}} = \bigcup_{t\in\den{e}}\den{\cnn{t}}$,
% for any program $\cal P$ and expression $e\in Exp_\perp$.
% \\In other terms, $\cnn{e} \clto t \Leftrightarrow \exists s. (e \clto s \wedge \cnn{s} \clto t)$.
% \item[(ii)] In the ($\Rightarrow$) part of (i), if $t\neq \perp, \con \neq [\ ]$ and the derivation of $\cnn{e} \clto t$ has size
% $K$, then the derivations of $e \clto s$ and $\cnn{s} \clto t$ can be chosen with sizes $< K$
% and $\leq K$ respectively.
% \end{itemize}
\end{theorem}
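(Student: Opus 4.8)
The plan is to prove the two inclusions $\den{\cnn{e}} \supseteq \bigcup_{t\in\den{e}}\den{\cnn{t}}$ and $\den{\cnn{e}} \subseteq \bigcup_{t\in\den{e}}\den{\cnn{t}}$ separately, and in each case to proceed by induction on the structure of the context $\con$. Actually, the cleanest route is to first reduce to the case of a \emph{one-layer} context, i.e. to prove the statement when $\con$ is $[\ ]\,e'$ or $e'\,[\ ]$ (and the trivial case $\con = [\ ]$, where the identity holds because $\den{t}\supseteq\{t\}$ always and, conversely, $t'\in\den{t}$ with $t\in\den{e}$ forces $t'\in\den{e}$ by a straightforward "value-approximation is transitive" lemma); a general context is then handled by iterating the one-layer result along the spine of $\con$. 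So the core of the argument is the single equation $\den{e_0\,e_1} = \bigcup_{t_0\in\den{e_0}}\den{t_0\,e_1}$ (and its symmetric companion).

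For the $\supseteq$ direction of the core equation, suppose $t\in\den{t_0\,e_1}$ with $t_0\in\den{e_0}$, so we have {\it HOCRWL}-derivations $\clho e_0\crwlto t_0$ and $\clho t_0\,e_1\crwlto t$. The plan is to graft these together: inspect the last rule of the derivation of $t_0\,e_1\crwlto t$ — it is one of {\bf (B)}, {\bf (DC)} or {\bf (OR)}, each of which, when the head of the application is the pattern $t_0$, contains a premise that evaluates (a prefix of) $t_0$ together with $e_1$; since $t_0$ is a pattern, any subderivation $t_0\crwlto s$ can be replaced by the derivation of $e_0\crwlto t_0$ suitably combined with the derivation of $t_0\crwlto s$. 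Here I would use (or prove as an auxiliary lemma) the fact that derivations are closed under "replacing a pattern by an expression that reduces to it": if $\clho e_0\crwlto t_0$ and $\clho \con'[t_0]\crwlto t$ then $\clho \con'[e_0]\crwlto t$. This closure lemma, proved by induction on the derivation of $\con'[t_0]\crwlto t$, is the technical heart of this direction.

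For the $\subseteq$ direction, suppose $\clho e_0\,e_1\crwlto t$ and again case-split on the last applied rule. In every case the rule has premises that evaluate $e_0$ (or a function-prefix built from $e_0$) down to some pattern, say the relevant subderivation establishes $\clho e_0\crwlto t_0$ for a pattern $t_0$; splicing the remaining premises on top of $t_0$ instead of $e_0$ yields $\clho t_0\,e_1\crwlto t$, exhibiting $t$ in $\den{t_0\,e_1}$ with $t_0\in\den{e_0}$. The {\bf (B)} case is immediate ($t=\bot$, pick $t_0=\bot$). The subtle point is bookkeeping the {\bf (OR)} rule, where $e_0$ may itself be a partial application $f\,\overline{e}$ of a defined function and the matching substitution $\theta$ is shared across the premises; one must check that the pattern reached for $e_0$ is exactly $f\,\overline{p\theta}$ (or a further instance when $e_0$ contributes extra arguments), which is where linearity of left-hand sides and the absence of extra variables are used.

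I expect the main obstacle to be the closure lemma used in the $\supseteq$ direction — proving that an already-derived value of a subterm can be "folded back in" without disturbing the rest of the derivation — because it requires careful handling of the {\bf (OR)} rule: the shared p-substitution $\theta$ that performs call-time choice must be threaded consistently when we replace a pattern occurrence by an expression, and this is precisely the place where call-time choice (as opposed to run-time choice) makes the statement true in the particular "$\bigcup_{t\in\den{e}}$" form rather than some naive substitutive form. Once that lemma is in place, the induction on $\con$ is routine. (This is the result quoted from \cite{LRSflops08}, so a full proof may simply be cited; the sketch above indicates how it goes.)
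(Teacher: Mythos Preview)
The paper does not prove this theorem at all: it is stated with the citation \cite{LRSflops08} and used as a black box. Your closing parenthetical already notices this, so strictly speaking your ``proof'' agrees with the paper's, which is simply to cite the external reference.

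Your sketch of how the argument in \cite{LRSflops08} goes is essentially correct: one establishes the two inclusions for a single application layer and iterates along the structure of the context, the $\supseteq$-direction resting on a substitution/monotonicity lemma (replace a pattern by an expression reducing to it) and the $\subseteq$-direction on a case analysis of the last rule, with the {\bf (OR)} case requiring an inner induction on the premise $r\theta\,a_1\ldots a_m \crwlto t$ to extract the intermediate pattern. One small inaccuracy: the absence of extra variables is \emph{not} needed for compositionality---the result in \cite{LRSflops08} holds for arbitrary {\it HOCRWL} programs, extra variables included; in the present paper the no-extra-variables restriction is used only for Lemma~\ref{lSafeExt} (preservation of denotations under safe extensions), which is an independent ingredient of the full-abstraction proof. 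Linearity of left-hand sides, on the other hand, is indeed relied upon in the {\bf (OR)} bookkeeping.
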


The {\it HOCRWL} logic is related to several operational notions. In \cite{GHR97} a goal solving narrowing calculus was presented and its strong adecuacy to {\it HOCRWL} shown. The operational semantics of \cite{AHHOV05} has been also used in many works in the field of FLP. Its equivalence with the first order version of {\it HOCRWL} was stated in \cite{LRSentcs06}, and it can be transfered to higher order through the results of \cite{LRSflops08,AHHOV05}. The formalization of graph rewriting of \cite{EchahedJanodet97IMAG,EchahedJanodet98JICSLP} has been often used in FLP too, and although never formally proved, it is usually considered that it specifies the same behaviour. Finally, in \cite{LRSflops08} a notion of higher order rewriting with local bindings called \emph{HOlet-rewriting} and its lifting to narrowing were proposed, and its adequacy to {\it HOCRWL} was formally proved. It can be summarized in the following result:
\begin{theorem}[\cite{LRSflops08}]\label{letrw}
$\forall e \in Exp$, $t \in Pat$, $t \in \den{e}^\prog$ iff $\prog \vdash e\ \f^* t$, where $\f^*$ stands for the reflexive-transitive closure of the {\it HOlet}-rewriting relation.
\end{theorem}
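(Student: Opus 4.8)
The plan is to prove the two implications separately and, in both directions, to factor the argument through the let-enriched calculus \emph{CRWL}$_{let}$, whose syntax already contains the $\mathtt{let}$ construct that {\it HOlet}-rewriting manipulates. Concretely, I would first recall (or introduce) \emph{CRWL}$_{let}$, obtained from the calculus of Fig.~\ref{fig:hocrwl} by adding a rule for $\mathtt{let}\ X = e_1\ \mathtt{in}\ e_2$ that derives $e_1\crwlto t_1$ for some partial pattern $t_1$ together with $e_2[X/t_1]\crwlto t$, and its denotation $\dgl{\cdot}$; then establish \emph{conservativity}: for every let-free $e$ and $t\in Pat_\perp$, $\cl e\crwlto t$ iff $\gl e\crwlto t$, i.e.\ $\den{e}=\dgl{e}$. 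This is proved by mutual induction on derivations, the only point of substance being that in the absence of genuine sharing a $\mathtt{let}$ may be collapsed by the corresponding p-substitution; it also uses the analogue of Theorem~\ref{lemacompos} for $\dgl{\cdot}$, which holds for the same reasons. After this reduction it suffices to show, for let-free $e$ and \emph{total} $t$, that $t\in\dgl{e}$ iff $\prog\vdash e\ \f^* t$, now arguing against a calculus whose syntax matches the rewriting relation.

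For the \emph{soundness} direction ($e\ \f^* t$ implies $t\in\dgl{e}$, hence $t\in\den{e}$) I would prove the single-step lemma $\prog\vdash e\ \f e'$ implies $\dgl{e'}\subseteq\dgl{e}$, by case analysis on the {\it HOlet}-rule applied. A congruence step $\cnn{e_0}\ \f\ \cnn{e_0'}$ with $e_0\ \f\ e_0'$ a root step follows from $\dgl{e_0'}\subseteq\dgl{e_0}$ and compositionality, since $\dgl{\cnn{e_0'}}=\bigcup_{s\in\dgl{e_0'}}\dgl{\cnn{s}}\subseteq\bigcup_{s\in\dgl{e_0}}\dgl{\cnn{s}}=\dgl{\cnn{e_0}}$. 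Each root rule is then checked directly: the rule-firing step against $f\ \overline{p}\tor r$ corresponds to an application of (OR) read backwards, and the let-introduction, binding, flattening and elimination steps amount to routine rearrangements of the subderivation for the bound expression (again compositionality of $\dgl{\cdot}$ is the workhorse). Iterating the lemma along $e\ \f^* t$ gives $\dgl{t}\subseteq\dgl{e}$, and since any total pattern $t$ satisfies $t\in\dgl{t}$ by a trivial induction with rules (RR) and (DC), we conclude $t\in\dgl{e}$.

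For \emph{completeness} ($t\in\dgl{e}$ with $t$ total implies $e\ \f^* t$) I would argue by induction on the structure of a \emph{CRWL}$_{let}$-derivation of $e\crwlto t$, simulating each inference by a block of {\it HOlet}-steps: a (DC) node becomes congruence steps that evaluate the arguments; the $\mathtt{let}$ rule becomes a let-introduction step followed by the sub-simulations and a binding/elimination step; and an (OR) node $f\ e_1\ldots e_n\ a_1\ldots a_m\crwlto t$ with rule $f\ p_1\ldots p_n\tor r$ and p-substitution $\theta$ becomes a let-introduction binding $e_1,\ldots,e_n$ to fresh variables (sharing), the sub-simulations rewriting each bound variable to a total pattern matching $p_i$, the root rewrite step firing the rule, and finally the sub-simulation for $r\theta\ a_1\ldots a_m\crwlto t$. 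The \emph{main obstacle} is exactly the mismatch this simulation must absorb: \emph{CRWL}-style derivations introduce $\perp$ freely via (B) and instantiate rules with \emph{partial} p-substitutions $\theta\in PSubst_\perp$, whereas {\it HOlet}-rewriting only ever produces total patterns and fires rules with ordinary matching. The bridge is a normalisation of the derivation: since the final result $t$ is total, every occurrence of $\perp$ on which $t$ actually depends is refined further down the proof, so one can pre-process the \emph{CRWL}$_{let}$-proof into one in which every (OR)-instantiation uses a \emph{total} p-substitution and (B) is used only in positions discarded by later steps; the simulation then goes through literally, a monotonicity ($\perp$-substitution) property of the calculus being used to reconcile $\theta$ with the total matchers produced operationally. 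By contrast the higher-order features add essentially nothing: (DC) and (OR) already treat partial applications $h\ \overline{t}$ and $f\ \overline{p}\ \overline{a}$ uniformly, and the {\it HOlet} congruence rules act on applicative contexts, so the first-order equivalence proof lifts almost verbatim. Composing the two directions with conservativity yields the theorem.
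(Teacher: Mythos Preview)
The paper does not give a proof of this theorem at all: it is imported verbatim from \cite{LRSflops08}, as the bracketed citation in the theorem header indicates, and is used here only as a black box connecting $\den{\_}$ to the operational notion underlying the observations $\obs$ and $\obsfo$. There is therefore no in-paper argument to compare your proposal against.

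That said, your plan is a faithful reconstruction of the kind of argument carried out in that reference and in the first-order predecessor \cite{LRSentcs06}: factoring through a let-enriched calculus $\dgl{\cdot}$, a one-step subject-reduction lemma $\dgl{e'}\subseteq\dgl{e}$ lifted to contexts via compositionality for soundness, and a rule-by-rule simulation of a \emph{CRWL}$_{let}$ derivation by blocks of {\it HOlet}-steps for completeness. One point deserves tightening. Your claim that the proof can be normalised so that ``every (OR)-instantiation uses a \emph{total} p-substitution'' is too strong: for a rule like $f~X\tor 0$ applied to a diverging argument, the only $\theta$ available in the logic has $\theta(X)=\perp$, and no total refinement exists. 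What rescues the simulation operationally is not totality of $\theta$ but the $\mathtt{let}$ discipline itself: the argument is bound by a $\mathtt{let}$, the rule fires on the head, and the now-unneeded binding is removed by the garbage-collection/elim rule of {\it HOlet}-rewriting. Your parenthetical ``(B) is used only in positions discarded by later steps'' already points at this; just make it the mechanism rather than an afterthought, and drop the claim about total $\theta$.
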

% With this result on hand,
Therefore, %anytime
we can use the set of total values computed for an expression in {\it HOCRWL} as a characterization of the operational behaviour of that expression, as %by this result
it has a strong correspondence, not only with its behaviour under
{\it HOlet}-rewriting, but also under any of the operational notions
metioned above.

\subsection{Extensionality}
In order to achieve more generality and technical precision wrt. the discussion of Ex.\ref{fdouble}, we
% We
introduce here some new terminologies and notations
about extensional equivalence and related notions that will be used later on.
They can be expressed in terms of the {\it HOCRWL} semantics $\den{\_}$.

\begin{definition}[Extensional equivalence, extensional semantics]\label{def:extensional}
\begin{itemize}
\item[(i)] Given $n \geq 0$, two expressions $e,e' \in Expr_\perp$ are said to be
\emph{$n$-extensionally equivalent} ($e \sim_n e'$) iff
$\den{e~e_1\ldots e_n} = \den{e'~e_1\ldots e_n}$, for any $e_1,\ldots,e_n \in Expr_\perp$.
\item[(ii)] Given $n \geq 0$, $e \in Expr_\perp$, the $n$-extensional semantics of $e$ is defined as:
 $\den{e}_{ext_n} = \lambda t_1 \dots \lambda t_n.~\den{e\ t_1 \dots t_n }$ ($t_i \in Pat_\perp$).
\end{itemize}
\end{definition}

We can establish some relationships between these notions:

\begin{proposition}\label{propExt}~%\mbox{~}\\
\begin{itemize}
\item[(i)] $e \sim_n e'$ $\Rightarrow$ $e \sim_m e'$, for all $m > n$.
\item[(ii)] $e \sim_n e'$ $\Leftrightarrow$ $\den{e~t_1\ldots t_n} = \den{e'~t_1\ldots t_n}$,
for any $t_1,\ldots,t_n \in Pat_\perp$.
\item[(iii)] $e \sim_n e'$ $\Leftrightarrow$ $\den{e}_{ext_n}=\den{e'}_{ext_n}$
\end{itemize}
\end{proposition}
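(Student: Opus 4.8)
The plan is to obtain all three parts from the compositionality result (Theorem~\ref{lemacompos}), exploiting that $Pat_\perp \subseteq Exp_\perp$, so every partial pattern is in particular an expression and may be used both as an argument and as a witness ranging over a denotation.

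For (i) it suffices to prove the case $m=n+1$ and then iterate. Given $e \sim_n e'$ and arbitrary $e_1,\ldots,e_{n+1} \in Exp_\perp$, I would apply Theorem~\ref{lemacompos} with the context $\con = [\ ]\,e_{n+1}$ to the expressions $e\,e_1\ldots e_n$ and $e'\,e_1\ldots e_n$, getting $\den{e\,e_1\ldots e_{n+1}} = \bigcup_{t\in\den{e\,e_1\ldots e_n}}\den{t\,e_{n+1}}$ and the analogous identity for $e'$; since $\den{e\,e_1\ldots e_n}=\den{e'\,e_1\ldots e_n}$ by hypothesis, the two unions coincide. A straightforward induction on $m-n$ then yields the full statement.

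For (ii), the implication from left to right is immediate, because partial patterns are expressions. For the converse I would first establish, by induction on $n$, that for every $g \in Exp_\perp$ and all $e_1,\ldots,e_n \in Exp_\perp$ one has $\den{g\,e_1\ldots e_n} = \bigcup\{\den{g\,t_1\ldots t_n} \mid t_i \in \den{e_i},\ 1\le i\le n\}$: the base case is trivial, and in the inductive step one peels off $e_1$ by applying Theorem~\ref{lemacompos} with the context $((g\,[\ ])\,e_2)\cdots e_{n+1}$ and then invokes the induction hypothesis on the head $g\,t_1$ with the remaining $n$ arguments. Instantiating this identity with $g=e$ and with $g=e'$, and using that $\den{e\,t_1\ldots t_n}=\den{e'\,t_1\ldots t_n}$ for all $t_i\in\den{e_i}\subseteq Pat_\perp$, the two unions agree for every choice of $e_1,\ldots,e_n$, which is precisely $e \sim_n e'$.

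Finally, (iii) follows at once from (ii): by definition $\den{e}_{ext_n}$ and $\den{e'}_{ext_n}$ are the meta-level functions sending $(t_1,\ldots,t_n)\in Pat_\perp^{\,n}$ to $\den{e\,t_1\ldots t_n}$ and to $\den{e'\,t_1\ldots t_n}$ respectively, so they are equal iff they agree pointwise on $Pat_\perp^{\,n}$, and by (ii) this is exactly $e\sim_n e'$. The only slightly delicate point is keeping the bookkeeping of the nested unions straight in the induction for (ii); this is routine unfolding of compositionality, so I do not anticipate any real obstacle.
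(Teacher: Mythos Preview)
Your proposal is correct and follows essentially the same approach as the paper, relying throughout on compositionality (Theorem~\ref{lemacompos}). The only cosmetic difference is in part~(i): you iterate one extra argument at a time, whereas the paper absorbs all of $e_{n+1},\ldots,e_m$ at once by applying compositionality with the single context $\con = [\ ]\,e_{n+1}\ldots e_m$, which avoids the induction on $m-n$. For (ii) you spell out an explicit $n$-fold unrolling of compositionality that the paper merely gestures at, and your derivation of (iii) from (ii) matches the paper's.
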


\begin{proof}
The proof is easy, %but
thanks to compositionality of $\den{\_}$ (Th. \ref{lemacompos}).
\begin{itemize}
\item[(i)] Assume $e \sim_n e'$, $m > n$, let $e_1 \ldots e_m \in Expr_\perp$.
We must prove $\den{e~e_1\ldots e_m} = \den{e'~e_1\ldots e_m}$. We reason as follows:
$$
\begin{array}{lll}
\den{e~e_1\ldots e_m} &=& \\
\den{(e~e_1\ldots e_n) e_{n+1}\ldots e_m} &=& \mbox{(by compositionality)}\\
\bigcup_{t\in \den{e~e_1\ldots e_n}}{\den{t~e_{n+1}\ldots e_m}} &=& \mbox{(since $e \sim_n e'$)} \\
\bigcup_{t\in \den{e'~e_1\ldots e_n}}{\den{t~e_{n+1}\ldots e_m}} &=& \mbox{(by compositionality)} \\
\den{(e'~e_1\ldots e_n)e_{n+1}\ldots e_m} &=& \\
\den{e'~e_1\ldots e_m}
\end{array}
$$
\item[(ii)] Another direct use of compositionality
\item[(iii)] Consequence of \emph{(i),(ii)} and definitions of $\sim_n,\den{\_}_{ext_n}$.
\end{itemize}
\end{proof}
%%%%%%%%%%%%%%%%%%%%%%%%%%%%%%%%%%%%%%%%%%%%%%%%5

\section{CRWL and Full Abstraction}\label{sec:fully}
%%%%%%%%%%%%%%%%%%%%%%%%%%%%%%%%%%%%%%%%%%%%%%%%5
% \input{fully}

\subsection{Full Abstraction}

In this section we examine technically soundness and full abstraction of the {\it HOCRWL} semantics
$\den{\_}$  and its extensional variants $\den{\_}_{ext_k}$. We can anticipate a positive answer for $\den{\_}$
and negative for the others.

Full abstraction depends on a criterion of observability for expressions.
In constructor based languages, like FLP languages, it is reasonable to observe the outcomes of computations,
given by constructor forms reached by reduction. Here, we can interpret 'constructor form' in a liberal sense,
including HO-patterns, or in a more restricted sense, only with FO-patterns. This leads to the following notions of observation.

\begin{definition}[observations]
Let ${\cal  P}$ be a program. We consider the following observations:
\begin{itemize}
\item  $\obs^{\cal P}: Expr \mapsto Pat$ is defined as
%        $\obs^{\cal P}(e) = \{t\in Pat \mid e \f^*_P t\}$
        $\obs^{\cal P}(e) = \{t\in Pat \mid \prog \vdash e\ \f^* t\}$
\item  %$\obsfo^{\cal P}: Expr \mapsto FOPat$ is defined as
       %$\obsfo^{\cal P}(e) = \{t\in FOPat \mid e \f^*_P t\} (= \obs^{\cal P}(e) \cap FOPat)$
      $\obsfop{\cal P}: Expr \mapsto \mathit{FOPat}$ is defined as
      $\obsfop{\cal P}(e) = \{t\in \mathit{FOPat} \mid \prog \vdash e\ \f^* t\} (= \obs^{\cal P}(e) \cap \mathit{FOPat})$
\end{itemize}
\end{definition}

We remark that, due to the strong correspondence between reduction and semantics given
by  Th. \ref{letrw}, we also have $\obs^{\cal P}(e) = \den{e}^{\cal P} \cap Pat$, implying in particular
$\obs^{\cal P}(e) \subseteq \den{e}^{\cal P}$ (and similar conditions hold for $\obsfo$).
%But notice that these conditions by no means
%guarantee soundness or full abstraction for the semantics.

Now we turn to the definition of full abstraction.
In programming languages like PCF the condition for full abstraction is usually stated as:
\begin{equation}\label{fa0}
   \den{e} = \den{e'}\Leftrightarrow {\cal O}({\cal C}[e]) = {\cal O}({\cal C}[e']), \mbox{for any context ${\cal C}$}
\end{equation}
where ${\cal O}$ is the observation function of interest. Programs do not need to be mentioned, because programs and expressions can be identified by contemplating the
evaluation of $e$ under $\ppp$ as the evaluation of a big $\lambda$-expression or big $let$-expression embodying $\ppp$ and $e$.
Contexts pose no problems either. In our case, since programs  are kept different from
expressions, some care must be taken. % to achieve the desired level of precision.
%In particular, it is not a good idea
%to consider a fixed program $\ppp$ to which refer all elements (semantics, contexts, etc.) involved in \emph{(\ref{fa0})}.
It might happen that $\ppp$ has not enough syntactical elements and rules to built interesting distinguishing contexts.
For instance, if in Ex. \ref{fdouble} we drop the definition of $fdouble$, and we consider $\obsfo$ as observation,
then we cannot built a context that distinguishes $f$ from $f'$.
This would imply that soundness or full abstraction would not be intrinsic to the semantics, but would greatly depend on the program.
What we need is requiring the right part of (\ref{fa0}) to hold for all contexts that might be obtained
by extending $\ppp$ with new auxiliary functions. To be more precise, we say that $\ppp'$ is a \emph{safe extension} of $(\ppp,e)$
if $\ppp' = \ppp\cup \ppp''$, where $\ppp''$ does not include defining rules for any function symbol occurring in $\ppp$ or $e$.
% Any sensible notion of semantics should verify %It is clear that %a safe extension does not change the semantics of the involved expression, that is:
% $\den{e}^{\ppp}=\den{e}^{\ppp'}$ when $\ppp'$ safely extends $(P,e)$.
% %This  happens indeed for all the semantics considered below.
The following property of {\it HOCRWL} regarding safe extensions will be crucial for full abstraction.
The property is  subtler than it appears to be, as witnessed by the fact that it fails to hold if programs have extra variables, as discussed in Sect. \ref{sec:discussion}.
\begin{lemma}\label{lSafeExt}
% For any program $\prog$, expression $e$ and safe extension $\prog'$ for $(\prog, e)$ we have $\den{e}^\prog = \den{e}^{\prog'}$.
$\den{e}^{\ppp}=\den{e}^{\ppp'}$ when $\ppp'$ safely extends $(P,e)$.
\end{lemma}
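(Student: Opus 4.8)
The plan is to prove the two inclusions $\den{e}^{\ppp}\subseteq\den{e}^{\ppp'}$ and $\den{e}^{\ppp'}\subseteq\den{e}^{\ppp}$ by induction on the structure of \emph{HOCRWL}-derivations. The first inclusion is the easy one: since $\ppp' = \ppp\cup\ppp''$ contains every rule of $\ppp$, any derivation of $\clho e\crwlto t$ using $\ppp$ is literally a derivation using $\ppp'$, so monotonicity of the calculus under adding rules gives $\den{e}^{\ppp}\subseteq\den{e}^{\ppp'}$ for \emph{every} expression $e$, not only those satisfying the safe-extension hypothesis.

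The substance is the reverse inclusion. First I would strengthen the statement so that the induction goes through: I claim that for \emph{any} expression $d \in Exp_\perp$ all of whose function symbols occur in $\ppp$ or in $e$ (equivalently, none of whose function symbols is defined only in $\ppp''$), and any $t\in Pat_\perp$, if $\ppp' \vdash_{HOCRWL} d\crwlto t$ then $\ppp \vdash_{HOCRWL} d\crwlto t$. The lemma is the instance $d=e$. The proof is by induction on the structure of the $\ppp'$-derivation of $d\crwlto t$, with a case analysis on the last rule applied. For \textbf{(B)} and \textbf{(RR)} there is nothing to do, as these rules do not mention the program. For \textbf{(DC)}, $d = h\ e_1\ldots e_m$ with $h\in CS$ (a constructor heading a partial pattern) and $t = h\ t_1\ldots t_m$ with premises $e_i\crwlto t_i$; each $e_i$ is a subexpression of $d$, hence its function symbols are among those of $d$, hence in $\ppp$ or $e$, so the induction hypothesis applies to each premise and we reassemble the derivation in $\ppp$. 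The interesting case is \textbf{(OR)}: here $d = f\ e_1\ldots e_n\ a_1\ldots a_m$ with $f\in FS^n$, and the rule used is some $(f\ p_1\ldots p_n\tor r)\in\ppp'$, with premises $e_i\crwlto p_i\theta$, $a_j\crwlto a_j'$ (via (DC)/(RR) on the $a_j$'s as needed — actually the $a_j$ are passed along unchanged; let me just say the premises are $e_i\crwlto p_i\theta$ and $r\theta\ a_1\ldots a_m\crwlto t$). The key observation is that $f$ occurs in $d$, whose symbols are all in $\ppp$ or $e$; since $\ppp''$ defines no symbol occurring in $\ppp$ or in $e$, the symbol $f$ cannot be defined in $\ppp''$, so the rule $(f\ p_1\ldots p_n\tor r)$ actually belongs to $\ppp$. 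Moreover, because $\ppp$ has no extra variables, $var(r)\subseteq var(f\ p_1\ldots p_n)$, so every function symbol appearing in $r\theta$ either appears in $r$ — hence in $\ppp$, as $r$ is the right-hand side of a $\ppp$-rule — or is introduced through the substitution $\theta$ applied to variables of $p_1\ldots p_n$, i.e. it appears in some $p_i\theta$, which is a value of $e_i$, a subexpression of $d$; in either case such a symbol is in $\ppp$ or $e$. Hence $r\theta\ a_1\ldots a_m$ again satisfies the hypothesis of the strengthened claim, the induction hypothesis applies to all premises, and we rebuild the \textbf{(OR)} step over $\ppp$.

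The step I expect to be the main obstacle, and the reason the strengthening is chosen exactly as above, is controlling the symbols that appear in $r\theta\ a_1\ldots a_m$ in the \textbf{(OR)} case: without the no-extra-variables restriction the right-hand side $r$ could introduce, via an extra variable instantiated by $\theta$ to some pattern built with a $\ppp''$-symbol, a function symbol that is neither in $\ppp$ nor in $e$, breaking the induction — this is precisely the phenomenon flagged in the statement as the reason the lemma fails with extra variables, and the argument must make visible that the bindings of $\theta$ only ever come from matching the actual arguments $e_i$, whose symbols are already accounted for. Once the invariant ``all function symbols of the current goal lie in $\ppp$ or $e$'' is seen to be preserved by every rule of the calculus (trivially by (B), (RR), (DC), and by (OR) thanks to linearity of patterns and absence of extra variables, which together force $var(r)\subseteq\bigcup var(p_i)$ and hence confine new symbols to the evaluated arguments), the induction closes routinely, and combining the two inclusions yields $\den{e}^{\ppp}=\den{e}^{\ppp'}$.
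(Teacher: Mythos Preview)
Your approach matches the paper's: both prove the nontrivial inclusion by induction on the structure of the $\ppp'$-derivation, maintaining the invariant that no function symbol defined in $\ppp''$ appears in the statements being derived, so that every \textbf{(OR)} step actually uses a rule of $\ppp$ and the whole derivation can be replayed in $\ppp$.

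There is one small slip in your strengthening. As you state it, the induction hypothesis constrains only the function symbols of the \emph{source} $d$ in $d\crwlto t$, not those of the \emph{target} $t$. In the \textbf{(OR)} case you need to apply the IH to the premise $r\theta\,a_1\ldots a_m\crwlto t$, and for that you must know that the symbols occurring in $\theta(X)$, for $X\in var(p_i)$, lie in $\ppp$ or $e$. You justify this by saying such a symbol ``appears in some $p_i\theta$, which is a value of $e_i$'', but your IH as written gives you only that the derivation $e_i\crwlto p_i\theta$ can be rebuilt in $\ppp$; it says nothing about the symbols of $p_i\theta$. The paper closes this gap by making the invariant two-sided: from $defs(\ppp'')\cap\fs^a=\emptyset$ it concludes both that $defs(\ppp'')\cap\fs^s=\emptyset$ and that the same holds for every premise $a'\crwlto s'$. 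With that extra clause in the IH, the \textbf{(OR)} case goes through exactly as you sketch. (A minor side remark: in your \textbf{(DC)} case $h$ need not be a constructor; $h\in FS$ with $m<ar(h)$ is also allowed, but the argument is unchanged.)
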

\begin{proof}
As $\prog \subseteq \prog'$ then $\den{e}^{\ppp} \subseteq \den{e}^{\ppp'}$ trivially holds, as every {\it HOCRWL}-proof for $\prog \vdash e \clto t$ is also a proof for $\prog' \vdash e \clto t$. \\

On the other hand, to prove the inclusion $\den{e}^{\ppp'} \subseteq
\den{e}^{\ppp}$ let us precisely formalize the notion of safe
extension. For any program $\prog$, we write $defs(\prog)$  for the
set of function symbols defined in $\prog$ (i.e., appearing at the
root of some left-hand side of a program rule of $\prog$); for any
expression $e$, we write  $\fs^{e}$ for the set of function symbols
appearing in $e$; for any program $\prog$ and rule $(l \tor r) \in
\prog$ we define $\fs^{(l \tor r)} = \fs^l \cup \fs^r$ and $fs^\prog
= \bigcup_{(l \tor r) \in \prog} \fs^{(l \tor r)}$. Then $\prog'$ is
a safe extension of $(\prog, e)$ iff $\prog' = \prog \uplus \prog''$
such that $defs(\prog'') \cap (\fs^e \cup \fs^\prog) = \emptyset$.

Now we will see that for any proof for $\prog' \vdash a \clto s$ if $defs(\prog'') \cap \fs^a = \emptyset$ then $defs(\prog'') \cap \fs^s = \emptyset$ and for any premise $a' \clto s'$ appearing in that proof we have $defs(\prog'') \cap (\fs^{a'} \cup \fs^{s'}) = \emptyset$, by induction on the structure of $\prog' \vdash a \clto s$. Let us do a case distinction over the rule applied at the root. If it was B then the only statement is $a \clto \perp$ for which the condition holds because $\perp \not\in \fs$. If it was RR then the only statement is $x \clto x$, but $x \not\in \fs$. If it was DC then we apply the IH over each $e_i \clto t_i$, because  $defs(\prog'') \cap \fs^{(h~e_1 \ldots e_m)} = \emptyset$ implies $defs(\prog'') \cap \fs^{e_i}  = \emptyset$ for each $e_i$. All that is left is checking that $defs(\prog'') \cap \fs^{(h~t_1 \ldots t_m)} = \emptyset$. But $defs(\prog'') \cap \fs^{t_i}  = \emptyset$ for each $t_i$ by IH, and $h \in \fs^{(h~e_1 \ldots e_m)} \cap defs(\prog'') = \emptyset$ by hypothesis, so we are done. Finally, for OR we apply the IH to $e_i \clto p_i\theta$ and its premises, as we did in DC. Besides $f \in \fs^{(f~e_1 \ldots e_n~a_1 \ldots a_m)} \cap defs(\prog'') = \emptyset$ by hypothesis, so $(f~p_1 \ldots p_m \tor r) \in \prog$, hence $defs(\prog'') \cap \fs^{(f~p_1 \ldots p_m \tor r)} = \emptyset$, because $\prog''$ is a safe extension. Combining both facts with the absence of extra variables in program rules we get $\fs^{r\theta} \cap defs(\prog'') = \emptyset$. But $\fs^{(f~e_1 \ldots e_n~a_1 \ldots a_m)} \cap defs(\prog'') = \emptyset$ by hypothesis, hence $\fs^{(r\theta~a_1 \ldots a_m)} \cap defs(\prog'') = \emptyset$, to which we can apply the IH to conclude the proof.

Finally, assuming a proof $\prog' \vdash e \clto t$ we may apply the property above because $defs(\prog'') \cap \fs^e = \emptyset$, as $\prog''$ is a safe extension. Therefore $\prog''$ was not used in that proof and so it is also a proof for $\prog \vdash e \clto t$, since $\prog' = \prog \uplus \prog''$.
% ~\\
%
% Now assume a proof for $\prog' \vdash e \clto t$, we will see that for any statement $a \clto s$ appearing in that proof we have $defs(\prog'') \cap (\fs^a \cup \fs^s) = \emptyset$, by induction on the structure of $\prog' \vdash e \clto t$. Let us do a case distinction over the rule applied at the root. If it was B then the only statement is $e \clto \perp$ for which the condition holds as $defs(\prog'') \cap \fs^e = \emptyset$ because $\prog''$ is a safe extension and $\perp \not\in \fs$. If it was RR then the only statement is $x \clto x$. but $x \not\in \fs$. If it was RR then we apply the IH over each $e_i \clto t_i$ and every premise of them, so all we have to check is that $defs(\prog'') \cap \fs^{(h~t_1 \ldots t_m)} = \emptyset$. But $defs(\prog'') \cap \fs^{t_i}  = \emptyset$ for each $t_i$ by IH, and $h \in \fs^{(h~e_1 \ldots e_m)} \cap defs(\prog'') = \emptyset$ because $\prog''$ is a safe extension, so we are done. Finally, for OR we may apply the IH to $e_i \clto p_i\theta$ and its premises, as we did in DC. Besides
% \todo{On the other hand, assume $\prog' \vdash e \clto t$, then by the conditions over safe extensions}
\end{proof}

We can now define:

\begin{definition}[Full abstraction]\label{def:fullabst}\mbox{~}\\

\noindent{(a)}~
A semantics is \emph{fully abstract} wrt ${\cal O}$ iff for any ${\ppp}$ and $e,e' \in Expr$, the following two conditions
are equivalent:
\\
 \indent  {(i)}~ $\den{e}^{\ppp} = \den{e'}^{\ppp}$
  ~~~
  {(ii)}
  \begin{tabular}[t]{l}
  ${\cal O}^{\ppp'}({\cal C}[e]) = {\cal O}^{\ppp'}({\cal C}[e'])$ for any
  $\ppp'$ safely  extending \\$(\ppp,e)$, $(\ppp,e')$
  and any ${\cal C}$ built with the signature of $\ppp'$.
  \end{tabular}
\\
%In words: semantic equality is equivalent to observational indistinguishability.

\noindent{(b)}~
A  notion weaker than full abstraction is: a semantics is \emph{sound} wrt ${\cal O}$ iff the condition \emph{(i)} above implies
the condition \emph{(ii)}.
\\
%In words: semantic equality implies observational indistinguishability.

%\noindent{(c)}~
%A semantics is \emph{ compositional} iff for any ${\ppp}$ and $e,e' \in Expr$, the following two conditions
%are equivalent:
%\\~~
%\indent  {(i)}~ $\den{e}^{\ppp}\! =\! \den{e'}^{\ppp}$
%  ~
%  {(ii)}~ $\den{{\cal C}[e]}^{\ppp}\! =\! \den{{\cal C}[e']}^{\ppp}$ for any  ${\cal C}$ built with the signature of $\ppp$.
%\\
%In words: the semantics of an expression depends only on the semantics of its subexpressions. Notice that $(ii) \Rightarrow (i)$
%holds trivially (take ${\cal C} = [~]$).
%We remark also that $\den{\_}$ is compositional iff
%is fully abstract wrt itself when taken as observation function.
% OJO: borrado porque ya no cuadra con la nueva definici\'on
\end{definition}

For extensional semantics, our Ex. \ref{fdouble} (and obvious generalizations to
arities $k>1$) constitutes a proof of the following negative result:

\begin{proposition}\label{unsound}
\begin{itemize}
For any $k>0$, $\den{\_}_{ext_k}$is unsound wrt $\obs,\obsfo$.
This remains true even if programs are restricted to be left-FO.
\end{itemize}
\end{proposition}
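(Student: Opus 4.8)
The plan is to turn Ex.~\ref{fdouble} into a formal counterexample to soundness, and to replay it at every arity $k>0$. By Prop.~\ref{propExt}(iii), condition~(i) of Def.~\ref{def:fullabst} instantiated to the semantics $\den{\_}_{ext_k}$ reads exactly $e\sim_k e'$. So, to refute soundness of $\den{\_}_{ext_k}$ wrt $\obs$ (resp.\ $\obsfo$) it suffices to exhibit a program $\ppp$, two expressions $e,e'$ with $e\sim_k e'$, a safe extension $\ppp'$ of $(\ppp,e)$ and of $(\ppp,e')$, and a context $\con$ over the signature of $\ppp'$ such that $\obs^{\ppp'}(\con[e])\neq\obs^{\ppp'}(\con[e'])$ (resp.\ with $\obsfo$). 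I will in fact take $\ppp'=\ppp$, the trivial safe extension (its added part is empty), so that the whole counterexample lives inside one program; this already shows the failure is not an artefact of the safe-extension machinery.

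For $k=1$ I take $\ppp$ to be the program of Ex.~\ref{fdouble}, which is left-FO, with $e=f$, $e'=f'$ and $\con=fdouble~[\ ]~0$. For an arbitrary $k>0$ I replace the unary symbols by $k$-ary analogues: rules $g~\overline{X}\to 0$ and $h~\overline{X}\to s~0$ with $g,h\in FS^{k}$; rules $f\to g$ and $f\to h$ with $f\in FS^{0}$; the rule $f'~\overline{X}\to f~\overline{X}$ with $f'\in FS^{k}$; the rules $fadd~F~G~\overline{X}\to (F~\overline{X})+(G~\overline{X})$ and $fdouble~F\to fadd~F~F$; and as context $\con=fdouble~[\ ]~\overline{0}$ with $k$ copies of $0$. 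This program is again left-FO, since every left-hand side argument is a variable or the constructor $0$; for $k=1$ it is literally the program of Ex.~\ref{fdouble}.

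It then remains to check two facts, both essentially contained in Ex.~\ref{fdouble}. First, $f\sim_k f'$: the unique rule for $f'$ is $f'~\overline{X}\to f~\overline{X}$ and there are no extra variables, so any \emph{HOCRWL} derivation of $f'~\overline{t}\crwlto s$ (for patterns $\overline{t}$) that is not an instance of (B) ends with (OR) using that rule; taking $\theta=[\overline{X}/\overline{t}]$ (legal since each pattern reduces to itself) turns a derivation of $f~\overline{t}\crwlto s$ into one of $f'~\overline{t}\crwlto s$, and conversely the residual premise $f~\overline{p}\crwlto s$ with $\overline{p}\leq\overline{t}$ yields $f~\overline{t}\crwlto s$ by monotonicity of $\den{\_}$; hence $\den{f~\overline{t}}=\den{f'~\overline{t}}$ for all patterns, so $f\sim_k f'$ by Prop.~\ref{propExt}(ii), i.e.\ $\den{f}_{ext_k}=\den{f'}_{ext_k}$ by Prop.~\ref{propExt}(iii). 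Second, the context separates them: by Th.~\ref{letrw} it is enough to compute the relevant denotations. One gets $\den{f}=\{g,h,\perp\}$ while $\den{f'}=\{f',\perp\}$, because $f'\in FS^{k}$ is already a pattern; feeding these through the (OR) rule for $fdouble$, $fadd$ and $+$ yields, exactly as in Ex.~\ref{fdouble}, $\obsfo(fdouble~f~\overline{0})=\{0,s~(s~0)\}$, whereas in $fdouble~f'~\overline{0}$ the step $fdouble~F\to fadd~F~F$ copies the already-evaluated pattern $f'$, so the two occurrences of $f'~\overline{0}$ produced by $fadd$ are reduced independently and $\obsfo(fdouble~f'~\overline{0})=\{0,s~0,s~(s~0)\}$. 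Since $s~0\in\mathit{FOPat}$, both $\obsfo$ and hence $\obs$ distinguish $\con[e]$ from $\con[e']$, giving the claimed unsoundness, still within a left-FO program.

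The only genuinely delicate point, and the one I would write out in full, is the contrast in the last step between the sharing behaviour of $f$ and of $f'$: when the $0$-ary evaluable symbol $f$ is bound to $F$ by the rule for $fdouble$, call-time choice commits a single value of $f$ before the two copies in $fadd~F~F$ are used, whereas $f'$, being a pattern, is bound with no evaluation, so the two later occurrences of $f'~\overline{0}$ are fresh evaluable expressions making independent choices. This is precisely the operational phenomenon described informally in Ex.~\ref{fdouble}; here it has to be pinned down by displaying the relevant (OR)-derivations and invoking Th.~\ref{letrw} to carry the conclusion from $\f^{*}$-reachability back to the observations $\obs$ and $\obsfo$.
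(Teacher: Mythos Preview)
Your proposal is correct and follows exactly the route the paper takes: the paper's entire proof is the sentence ``our Ex.~\ref{fdouble} (and obvious generalizations to arities $k>1$) constitutes a proof,'' and you have carefully spelled out what those obvious generalizations are and why the example works formally. The only point worth flagging is that you invoke monotonicity of $\den{\_}$, which the paper does not state explicitly; it is a standard property of {\it CRWL}-style semantics, but in a fully self-contained write-up you might want to justify it or cite it.
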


This contrast with the following:

\begin{theorem}[Full abstraction]\label{th:fullabstraction}%\mbox{~}\\
% \begin{itemize}
$\den{\_}$ is fully abstract wrt $\obs$ and $\obsfo$.
% \end{itemize}
\end{theorem}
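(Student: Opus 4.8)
The plan is to prove the two implications of Definition~\ref{def:fullabst}(a) separately, for both $\obs$ and $\obsfo$ simultaneously where possible. The $(i)\Rightarrow(ii)$ direction is soundness, and I expect it to follow smoothly from the machinery already in place; the converse, adequacy, is where the real work lies.

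\textbf{Soundness, $(i)\Rightarrow(ii)$.} Assume $\den{e}^\ppp = \den{e'}^\ppp$ and let $\ppp'$ safely extend both $(\ppp,e)$ and $(\ppp,e')$, and let $\con$ be any context over the signature of $\ppp'$. By Lemma~\ref{lSafeExt}, $\den{e}^{\ppp'} = \den{e}^\ppp = \den{e'}^\ppp = \den{e'}^{\ppp'}$, so the two expressions have the same {\it HOCRWL}-denotation relative to the extended program. Now apply compositionality (Th.~\ref{lemacompos}) in $\ppp'$: $\den{\cnn{e}}^{\ppp'} = \bigcup_{t \in \den{e}^{\ppp'}} \den{\cnn{t}}^{\ppp'} = \bigcup_{t \in \den{e'}^{\ppp'}} \den{\cnn{t}}^{\ppp'} = \den{\cnn{e'}}^{\ppp'}$. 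Finally, by the remark following the definition of observations, $\obs^{\ppp'}(a) = \den{a}^{\ppp'} \cap Pat$ and $\obsfop{\ppp'}(a) = \den{a}^{\ppp'} \cap \mathit{FOPat}$, so equality of denotations of $\cnn{e}$ and $\cnn{e'}$ immediately yields equality of both observations. This settles soundness, hence also Proposition~\ref{unsound}'s counterpart is avoided here precisely because $\den{\_}$ is intensional.

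\textbf{Adequacy, $(ii)\Rightarrow(i)$.} This is the crux. Assume $\den{e}^\ppp \neq \den{e'}^\ppp$; I must construct a safe extension $\ppp'$ and a context $\con$ over the signature of $\ppp'$ with $\obs^{\ppp'}(\cnn{e}) \neq \obs^{\ppp'}(\cnn{e'})$ (and likewise for $\obsfo$). Without loss of generality there is a pattern $t_0 \in \den{e}^\ppp \setminus \den{e'}^\ppp$. The idea is to build a ``probe'' function that, given an expression, reduces it, matches the resulting value against (a generalization of) $t_0$, and emits an observable FO-constructor --- say the constant $\mathtt{true}$ --- exactly when the value is $t_0$. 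Concretely: introduce in $\ppp''$ a fresh constructor $\mathtt{true}$ (and $\mathtt{false}$ if needed) and a fresh function $test$ with a single rule whose left-hand side is $test$ applied to a linear pattern obtained from $t_0$ by replacing its variable/$\bot$ positions suitably, with right-hand side $\mathtt{true}$; take $\con = test~[\ ]$. The key facts to check are: (1) this $\ppp''$ is genuinely a safe extension, since $test$, $\mathtt{true}$, $\mathtt{false}$ are fresh and $\ppp''$ has no rules for symbols of $\ppp$, $e$, or $e'$; (2) by Lemma~\ref{lSafeExt} the denotations of $e$ and $e'$ are unchanged in $\ppp'$; (3) using the (OR) rule together with compositionality, $\mathtt{true} \in \den{test~e}^{\ppp'} = \bigcup_{t \in \den{e}^{\ppp'}} \den{test~t}^{\ppp'}$, and this holds iff some $t \in \den{e}^{\ppp'}$ matches the left-hand side of the $test$ rule; (4) arrange the $test$ rule so that $t_0$ matches but --- crucially --- the match is ``tight'' enough that no pattern strictly below $t_0$ in the approximation order triggers it, and one must account for the fact that $\den{\_}$ is downward closed (it contains $\bot$-approximants), so a single-pattern probe may also fire on smaller patterns that happen to lie in $\den{e'}$ too. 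The honest way around this is to let the $test$ rule's left-hand side be a \emph{linear} pattern $\hat t_0$ built from $t_0$ by keeping all constructor/function-symbol structure and turning each maximal non-$\bot$ variable subterm, and each $\bot$, into a fresh variable --- then $\hat t_0$ is matched exactly by those patterns in $\den{\_}$ whose ``spine'' agrees with $t_0$; since $t_0$ is total (a genuine $Pat$, picked minimal if necessary among the difference) one can choose $t_0$ \emph{minimal} in $\den{e}^\ppp \setminus \den{e'}^\ppp$ with respect to the approximation order, which guarantees that every pattern below $t_0$ lies in $\den{e'}^\ppp$ too, and then show the probe distinguishes. I expect the main obstacle to be exactly this matching argument: ensuring the probe fires on $\den{e}$ but not on $\den{e'}$ despite downward closure, which forces the minimality choice of $t_0$ and a careful treatment of HO-patterns appearing inside $t_0$ (here the \emph{intensional} view pays off: an HO-pattern like $f'$ inside $t_0$ is a genuine data constructor-like head that the probe can match on syntactically, which is precisely what an extensional semantics could not exploit). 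For $\obsfo$ the same construction works since the probe's output $\mathtt{true}$ is a FO-pattern; one only needs $\mathtt{true}$ to be a \emph{fresh} FO-constructor so it is in $\mathit{FOPat}$ and genuinely observable, giving $\obsfop{\ppp'}(\cnn{e}) \ni \mathtt{true} \notin \obsfop{\ppp'}(\cnn{e'})$.

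\textbf{Assembling.} Combining the two directions gives full abstraction for $\obs$; the $\obsfo$ case is obtained by the same contexts, observing $\mathtt{true} \in \obsfop{\ppp'}(\cnn{e}) \iff \mathtt{true} \in \den{\cnn{e}}^{\ppp'} \iff \mathtt{true} \in \obs^{\ppp'}(\cnn{e})$, so the distinguishing power is the same. The only subtle points are Lemma~\ref{lSafeExt} (already available, and noted to require absence of extra variables) and the minimality of the distinguishing pattern $t_0$; everything else is routine use of compositionality and the one-step (OR)/(DC) rules. \qed
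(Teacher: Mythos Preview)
Your soundness direction is correct and matches the paper. The adequacy direction, however, has a genuine gap when the distinguishing pattern $t_0$ contains free variables. Your boolean probe $test~\hat t_0 \to \mathtt{true}$, where $\hat t_0$ replaces both $\perp$ and the variables of $t_0$ by fresh rule variables, cannot tell a specific free variable from any other value at that position. Take $e = s~X$, $e' = s~0$: then $\den{e}^\ppp = \{\perp, s~\perp, s~X\}$ and $\den{e'}^\ppp = \{\perp, s~\perp, s~0\}$, so the only available $t_0$ in $\den{e}^\ppp\setminus\den{e'}^\ppp$ is $s~X$; your rule becomes $test~(s~Y) \to \mathtt{true}$, which fires on $e'$ (via $s~0\in\den{e'}$) just as it does on $e$. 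Minimality does not help---it actually hurts: if $t_0$ is minimal in $\den{e}\setminus\den{e'}$ and contains a variable $X$, then $t_0[X/\perp] \sqsubset t_0$ forces $t_0[X/\perp]\in\den{e'}$, so your probe certainly fires on $e'$. (For variable-free $t_0$ your construction does work, but minimality is not the reason: any $t$ matching $\hat t_0$ satisfies $t_0 \sqsubseteq t$, so $t\in\den{e'}$ already yields $t_0\in\den{e'}$ by downward closure.) Your aside that $t_0$ is ``total'' is also unjustified, since $\den{e}\subseteq Pat_\perp$.

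The paper's construction (Lemma~\ref{lAuxFADen}) solves this by abandoning the boolean test. For each partial pattern $t$ it introduces a family of fresh functions $g_s$ that \emph{copy} the argument into the output: $g_\perp~Z \to bot$ (a fresh constant), $g_X~U \to U$ at variable positions, and $g_{h\,t_1\ldots t_m}(h~X_1\ldots X_m) \to h~(g_{t_1}X_1)\ldots(g_{t_m}X_m)$ otherwise. One then proves $t\in\den{e}^\ppp$ iff $\hat t \in \den{g_t~e}^{\ppp'}$, where $\hat t$ is $t$ with $\perp$ replaced by $bot$ and variables \emph{kept}. The crucial point is that the specific variable $X$ now appears in the \emph{output} pattern one looks for in $\obs^{\ppp'}(g_t~e)$, not in a matching position of a rule, so it is genuinely observable. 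For $\obsfo$ the paper further replaces each head $h$ by a fresh constructor $h_m$ of arity $m$, forcing $\hat t\in\mathit{FOPat}$; your simpler device of emitting a fresh nullary constructor would suffice for that last step, were the underlying probe correct for open expressions.
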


The proof for this theorem will be based on the compositionality of $\den{\_}$ and the following result:
\begin{lemma}\label{lAuxFADen}
Let $\prog$ be any program. Consider the transformation $\hat{\_} :
Pat_\perp \rightarrow Pat$  defined by:
$$
%\begin{array}{ll}
\hat{X} = X ~~~~ ~~~~ \hat{\perp} = bot ~~~~~~~
\widehat{h~t_1~\ldots t_m} = h~\hat{t_1} ~\ldots~\hat{t_m}
%\end{array}
$$
where $bot$ is a fresh constant constructor symbol.
%Note that $\hat{t} \in Pat$ for any $t \in Pat_\perp$.
Consider also the program $\prog' = \prog \uplus \prog_{g_t}$,
where $\prog_{g_t}$ consists of the
following rules defining some fresh symbols $g_s \in FS$:
$$
\begin{array}{ll}
g_X~U \tor U ~~~~ ~~~~ g_\perp~X \tor bot \\
g_{(h~t_1~\ldots t_m)} (h~X_1~\ldots X_m) \tor h~(g_{t_1} X_1) \ldots (g_{t_m} X_m) \\
\end{array}
$$
Then:
\begin{itemize}
\item[(i)] $\prog'$ is a safe extension of $(\prog, e)$.
\item[(ii)] $t\in \den{e}^\prog$ iff $\hat{t} \in \den{g_{t}~e}^{\prog'}$,
for any $e \in Exp_\perp, t \in Pat_\perp$ built with the signature of $\prog$.
\end{itemize}
%For any program $\prog$, the transformation $\hat{\_} : Pat_\perp \rightarrow Pat$ is defined by:
%$$
%\begin{array}{ll}
%\hat{X} = X ~~~~ ~~~~ \hat{\perp} = bot \\
%\widehat{h~t_1~\ldots t_m} = h~\hat{t_1} ~\ldots~\hat{t_m}
%\end{array}
%$$
%where $bot$ is a fresh (not used in $\prog$ nor any expression so far) constant constructor symbol. Note that then $\forall t \in Pat_\perp$ we have $\hat{t} \in Pat$. Besides for any $e \in Exp_\perp, t \in Pat_\perp$ built with the signature of $\prog$, $t \in \den{e}^\prog$ iff $\hat{t} \in \den{g_{t}~e}^{\prog'}$, where $\prog' = \prog \uplus \prog_{g_t}$ for $\prog_{g_t}$ containing the following rules defining some fresh symbols $g_s \in FS$:
%$$
%\begin{array}{ll}
%g_X~U \tor U ~~~~ ~~~~ g_\perp~X \tor bot \\
%g_{(h~t_1~\ldots t_m)} (h~X_1~\ldots X_m) \tor h~(g_{t_1} X_1) \ldots (g_{t_m} X_m) \\
%\end{array}
%$$
%Besides $\prog'$ is a safe extension of $(\prog, e)$.
\end{lemma}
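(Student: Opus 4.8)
The plan is to settle (i) directly and to reduce (ii), via Lemma~\ref{lSafeExt} and compositionality (Th.~\ref{lemacompos}), to a local computation about the fresh ``filter'' functions $g_s$. For (i): by construction $\prog_{g_t}$ defines only the fresh symbols $g_s$ --- one for each subpattern $s$ of $t$ --- none of which occurs in $\prog$ or $e$, and each of its rules is left-linear and free of extra variables (the variables of every right-hand side coincide with those of its left-hand side, and each left-hand side $g_s~p$ has $p$ a pattern with the same shape as $s$). Hence $defs(\prog_{g_t})\cap(\fs^{e}\cup\fs^{\prog})=\emptyset$, i.e. $\prog'=\prog\uplus\prog_{g_t}$ safely extends $(\prog,e)$ in the sense formalized in the proof of Lemma~\ref{lSafeExt}.

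For (ii) I would first observe that, by (i) and since $e$ is built over the signature of $\prog$, Lemma~\ref{lSafeExt} gives $\den{e}^{\prog}=\den{e}^{\prog'}$ --- so $\den{e}^{\prog}$ contains only partial patterns over the signature of $\prog$ --- and that compositionality applied to the context $g_t~[\ ]$ under $\prog'$ gives $\den{g_t~e}^{\prog'}=\bigcup_{s\in\den{e}^{\prog'}}\den{g_t~s}^{\prog'}$, which by the previous equality equals $\bigcup_{s\in\den{e}^{\prog}}\den{g_t~s}^{\prog'}$. The key step is then to prove the claim: \emph{for every partial pattern $s$ over the signature of $\prog$ and every $t\in Pat_\perp$, one has $\hat{t}\in\den{g_t~s}^{\prog'}$ iff $t\ordap s$}, where $\ordap$ is the approximation ordering on partial patterns. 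Granting the claim, $\hat{t}\in\den{g_t~e}^{\prog'}$ iff $t\ordap s$ for some $s\in\den{e}^{\prog}$, which is equivalent to $t\in\den{e}^{\prog}$ because $\den{\_}$ is downward closed under $\ordap$ (and $s=t$ works as witness when $t\in\den{e}^{\prog}$); this is (ii). Here I would lean on two standard facts about the {\it CRWL}-calculus: downward closure of $\den{\_}$ under $\ordap$, and the characterization $\den{p}=\{u\mid u\ordap p\}$ for a pattern $p$ (which holds because patterns are irreducible, so a derivation $p~\crwlto~u$ can only use rules (B), (RR), (DC)).

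Finally I would prove the claim by induction on the structure of $t$, inspecting the single rule of $\prog_{g_t}$ headed by $g_t$; since $ar(g_t)=1$, any derivation of $g_t~s~\crwlto~v$ with $v\neq\perp$ must end with an (OR)-step using that rule. The cases $t=\perp$ (then $\hat{t}=bot$, and via $g_\perp~X\tor bot$ both $bot\in\den{g_\perp~s}^{\prog'}$ and $\perp\ordap s$ hold unconditionally) and $t=X$ (then $\hat{t}=X$, and via $g_X~U\tor U$, $X\in\den{g_X~s}^{\prog'}$ iff $\prog'\vdash s\crwlto X$ iff $s=X$ iff $X\ordap s$) are routine. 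In the case $t=h~t_1\ldots t_m$, $\hat{t}=h~\hat{t_1}\ldots\hat{t_m}$ is a total pattern, and an (OR)-step with the rule $g_t~(h~X_1\ldots X_m)\tor h~(g_{t_1}~X_1)\ldots(g_{t_m}~X_m)$ and a p-substitution $\theta$ requires $s~\crwlto~h~\theta(X_1)\ldots\theta(X_m)$ --- which, $s$ being a pattern, forces $s=h~s_1\ldots s_m$ with $\theta(X_i)\ordap s_i$ --- followed by a (DC)-step on the right-hand side, i.e. $g_{t_i}~\theta(X_i)~\crwlto~\hat{t_i}$ for every $i$. Since each $\theta(X_i)$ approximates $s_i$, it is a partial pattern over the signature of $\prog$, so the induction hypothesis turns $g_{t_i}~\theta(X_i)~\crwlto~\hat{t_i}$ into $t_i\ordap\theta(X_i)$; combining, $\hat{t}\in\den{g_t~s}^{\prog'}$ iff $s=h~s_1\ldots s_m$ and some $\theta$ satisfies $t_i\ordap\theta(X_i)\ordap s_i$ for all $i$, which --- taking $\theta(X_i)=t_i$ --- is equivalent to $t_i\ordap s_i$ for all $i$, i.e. to $t\ordap s$.

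I expect the main obstacle to be exactly this last inductive case: correctly matching the argument pattern $s$ against the linear left-hand side $h~X_1\ldots X_m$, following how the witnessing p-substitution $\theta$ links $t$, the $\theta(X_i)$ and $s$ in both directions, and keeping the intermediate values $\theta(X_i)$ over the signature of $\prog$ so that the induction hypothesis applies. This last condition is precisely what Lemma~\ref{lSafeExt} secures, and, as the paper remarks, it is the point where the absence of extra variables is essential.
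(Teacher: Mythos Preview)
Your proof is correct; the paper's own proof is a one-line stub (``two simple inductions on the structure of $t$''), and your induction on $t$ is exactly that, fleshed out. Your preliminary reduction via Lemma~\ref{lSafeExt} and compositionality (Th.~\ref{lemacompos}) to the local claim $\hat{t}\in\den{g_t~s}^{\prog'}\Leftrightarrow t\ordap s$ for \emph{patterns} $s$ is a clean organizational choice that the paper does not spell out; it makes transparent why the intermediate values $\theta(X_i)$ stay over the signature of $\prog$ (they approximate subpatterns of $s$), which is precisely where the absence of extra variables---through Lemma~\ref{lSafeExt}---enters.
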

\begin{proof}
It is clear that $\prog'$ is a safe extension as it only defines new rules for fresh function symbols. The other equivalence holds by two simple inductions on the structure of $t$.
%\todo{Demo si tengo sitio?}
% Now, concerning the left to right implication, assuming $t \in \den{e}^\prog$, we will see that $\hat{t} \in \den{g_{t}~e}^{\prog'}$ by induction on the structure of $t$. Concerning the base cases:
% \begin{itemize}
%  \item $t \equiv \perp$ : Then
% $$
% \infer[OR]{\prog' \vdash g_\perp~e \clto bot \equiv \hat{\perp}}
%           { \infer[B]{e \clto \perp}{} \ \infer[DC]{bot \clto bot}{} }
% $$
% \item $t \equiv X \in \var$ : Then
% $$
% \infer[OR]{\prog' \vdash g_X~e \clto X \equiv \hat{X}}
%           {\infer{e \clto X}{(*)} \ \infer[RR]{X \clto X}{}  }
% $$
% $(*)$ as $t \equiv X \in \den{e}^\prog = \den{e}^{\prog'}$ by hypothesis, and because $\prog'$ is a safe extension.
% \end{itemize}
\end{proof}

\begin{proof}[For Theorem \ref{th:fullabstraction}]
First of all we will prove the full abstraction wrt. $\obs$.
We will see that $\den{e}^\prog = \den{e'}^\prog$ iff for any safe extension $\prog'$ of $(\prog,e)$ and $(\prog,e')$, for any context $\con$ built with the signature of $\prog'$ we have $\obs^{\prog'}(\con[e]) = \obs^{\prog'}(\con[e'])$.\\
Concerning the left to right implication, assume $\den{e}^\prog = \den{e'}^\prog$ and fix some safe extension $\prog'$ and some context $\con$ built on it. First we will see that $\obs^{\prog'}(\con[e]) \subseteq \obs^{\prog'}(\con[e'])$. Assume some $t \in \obs^{\prog'}(\con[e])$, then $t \in \den{\con[e]}^{\prog'}$ by definition and Th. \ref{letrw}. But then
$$
\begin{array}{ll}
t \in \den{\con[e]}^{\prog'}
= \bigcup_{t \in \den{e}^{\prog'}} \den{\con[t]}^{\prog'} & \mbox{ by Th. \ref{lemacompos}} \\
= \bigcup_{t \in \den{e}^{\prog}} \den{\con[t]}^{\prog'} & \mbox{ by Lemma \ref{lSafeExt}, as $\prog'$ is a safe extension} \\
= \bigcup_{t \in \den{e'}^{\prog}} \den{\con[t]}^{\prog'}  & \mbox{ by hypothesis} \\
= \bigcup_{t \in \den{e'}^{\prog'}} \den{\con[t]}^{\prog'}  & \mbox{ by Lemma \ref{lSafeExt}, as $\prog'$ is a safe extension } \\
= \den{\con[e']}^{\prog'}  & \mbox{ by Th. \ref{lemacompos}} \\
\end{array}
$$
But then $t \in \obs^{\prog'}(\con[e'])$ by definition and Th. \ref{letrw}. The other inclusion can be proved in a similar way.

% Regarding the right to left implication, we will use the transformation $\hat{\_} : Pat_\perp \rightarrow Pat$ defined by:
% $$
% \begin{array}{ll}
% \hat{X} = X ~~~~ ~~~~ \hat{\perp} = bot \\
% \widehat{h~t_1~\ldots t_m} = h~\hat{t_1} ~\ldots~\hat{t_m}
% \end{array}
% $$
% where $bot$ is a fresh constant constructor symbol. Note that then $\forall t \in Pat_\perp$ we have $\hat{t} \in Pat$. Besides we can prove that for any program $\prog$, $e \in Exp_\perp, t \in Pat_\perp$ built with the signature of $\prog$, $t \in \den{e}^\prog$ iff $\hat{t} \in \den{g_{t}~e}^{\prog'}$, where $\prog' = \prog \uplus \prog_{g_t}$ for $\prog_{g_t}$ containing the following rules defining some fresh symbols $g_t \in FS$:
% $$
% \begin{array}{ll}
% g_X~U \tor U ~~~~ ~~~~ g_\perp~X \tor bot \\
% g_{(h~t_1~\ldots t_m)} (h~X_1~\ldots X_m) \tor h~(g_{t_1} X_1) \ldots (g_{t_m} X_m) \\
% \end{array}
% $$
% This can prove this property by a simple induction on the structure of $t$. Note also that then $\prog'$ is a safe extension and then we can assume $\obs^{\prog'}(\con[e]) = \obs^{\prog'}(\con[e'])$ for any $\con$ built on $\prog'$. Thus, for any $t \in \den{e}^\prog$ we have $\hat{t} \in \den{g_t~e}^{\prog'}$, and so $\hat{t} = \obs^{\prog'}(g_t~e) = \obs^{\prog'}(g_t~e')$ by definition, Th. \ref{letrw} and hypothesis. But then $\hat{t} \in \den{g_t~e'}^{\prog'}$ and so $t \in \den{e'}^{\prog}$
Regarding the right to left implication, we will use the transformation $\hat{\_}$ of Lemma \ref{lAuxFADen}. We can also take the program $\prog'$ of Lemma \ref{lAuxFADen} which is a safe extension of $(\prog,e)$ and $(\prog,e')$ as it only defines new rules for fresh function symbols. Therefore we can assume $\obs^{\prog'}(\con[e]) = \obs^{\prog'}(\con[e'])$ for any $\con$ built on $\prog'$. Besides, for any $t \in \den{e}^\prog$ we have $\hat{t} \in \den{g_t~e}^{\prog'}$ by Lemma \ref{lAuxFADen}, and so $\hat{t} \in \obs^{\prog'}(g_t~e) = \obs^{\prog'}(g_t~e')$ by definition, Th. \ref{letrw} and hypothesis. But then $\hat{t} \in \den{g_t~e'}^{\prog'}$ by definition and Th. \ref{letrw}, and so $t \in \den{e'}^{\prog}$ by Lemma \ref{lAuxFADen} again. The other inclusion of $\den{e'}$ in $\den{e}$ can be proved in a similar way.\\

Now we will prove the full abstraction wrt. $\obsfo$. The left to right implication can be proved in exactly the same way we did for $\obs$. Concerning the other implication we modify the transformation $\hat{\_}$ of Lemma \ref{lAuxFADen} in the following way:
$$
\begin{array}{ll}
\widehat{h~t_1~\ldots t_m} = h_m~\hat{t_1} ~\ldots~\hat{t_m} \\
g_{(h~t_1~\ldots t_m)} (h~X_1~\ldots X_m) \tor h_m~(g_{t_1} X_1) \ldots (g_{t_m} X_m) \\
\end{array}
$$
where $h_m$ is a fresh constructor symbol of arity $m$. Note that then $\forall t \in Pat_\perp$ we have $\hat{t} \in \mathit{FOPat}$. Besides it is still easy to prove that for any $e \in Exp_\perp, t \in Pat_\perp$ built with the signature of $\prog$, $t \in \den{e}^\prog$ iff $\hat{t} \in \den{g_{t}~e}^{\prog'}$, where $\prog' = \prog \uplus \prog_{g_t}$, and that $\prog'$ is a safe extension of $\prog$, by a trivial modification of the proof for Lemma \ref{lAuxFADen}. With these tool the proof proceeds exactly like in the one for $\obs$, but using these new definitions of $\hat{\_}$ and $g_t$.
\end{proof}

\subsection{Discussion: the case of extra variables}\label{sec:discussion}
%%%%%%%%%%%%%%%%%%%%%%%%%%%%%%%%%%%%%%%%%%%%%%%%5
% \input{discussion}
% Contraejemplos varios:\\
As pointed in Sect. \ref{sec:crwl}, in this work we assume that our programs do not contain extra variables, i.e., $var(r) \subseteq var(f~\overline{t})$ holds for any program rule $f~t_1 \ldots t_n \tor r$. This condition is necessary for the full abstraction results to hold, as we can see in the following example.
\begin{example}
Consider a signature such that  $FS = \{f/1, g/1\}$, $CS = \{0/0, 1/0\}$, and the program
 $\prog = \{f~X \tor Y~X\}$.
Note the extra variable $Y$ in the rule for $f$.

Then we have $\den{f~0}^\prog =
\{\perp\} = \den{f~1}^\prog$, because any derivation of $f~0 \clto t$
using (OR)  must have the form
$$
\infer[OR]{\prog \vdash f~0 \clto t}
{
0 \clto 0
\ ~~\infer[X]{\varphi~0 \clto t}
            {\ldots }
}
$$
where $\varphi$ can be any pattern ($f$, $g$, $0$, $1$ or $\perp$) %$f,g$
and X can be (OR) or (B). In all cases the only possible value for $t$ in $\varphi~0 \clto t$ will be $\perp$.
A similar reasoning holds for $f~1$.
However, for
% Ademas como $f~0$ y $f~1$ son ground sus hipersemanticas tb son iguales por el mismo motivo.
$\prog' = \prog \uplus \{g~0 \tor 1\}$, which is a safe extension for $(\prog, f~0)$ and $(\prog, f~1)$ we can do:
$$
\infer[OR]{~~~~~~\prog' \vdash f~0 \clto 1~~~~~~}
{
0 \clto 0
\ ~~\infer[OR]{g~0 \clto 1}
            {0 \clto 0 \ ~~1 \clto 1 }
}
$$
while for $f~1$ we can only do:
$$
\infer[OR]{\prog' \vdash f~1 \clto \perp}
{
1 \clto 1
\ ~~\infer[B]{g~1 \clto \perp}
            {}
}
$$
Hence the context $[]$ and the safe extension $\prog'$ yield different observations for $f~0$ and $f~1$.
\end{example}
The previous example  can be discarded if we assume that we have at least one constructor for each arity, or at least for the maximum of the arities of function symbols. This is reasonable because it is like having tuples of any arity. With this assumption and the previous program and expression we do not have $\den{f~a}^\prog = \den{f~b}^\prog$ anymore, as $c~a \in \den{f~a}$ and $c~b \in \den{f~b}$, hence the hypothesis of the condition for full abstraction fails.\\
Nevertheless the following example shows that full abstraction fails even under the assumption of having a constructor for each arity.
% Sin embargo el siguiente contraejemplo funciona aun con tuplas
\begin{example}
For $\prog = \{f~1 \tor 2, h~X \tor f~(Y~X)\}$ and $FS = \{f/1, h/1, g/1\}$ we have $\forall \theta \in PSubst_\perp, 1 \not\in \den{(\theta(Y))~0}^\prog \cup \den{(\theta(Y))~1}^\prog$, hence $\den{h~0}^\prog = \{\perp\} = \den{h~1}^\prog$. But for $\prog' = \prog \uplus \{g~0 \tor 1\}$, which is a safe extension for $(\prog, h~0)$ and  $(\prog, h~1)$, we have $\prog' \vdash h~0 \clto 2$ while $\prog' \vdash h~1 \not\clto 2$.
% luego ni siquiera con la signatura fijada tenemos que la hipersemantica sea full abstract respecto a si misma.
\end{example}

% The point is that, if extra variables are allowed, for a fixed program $\prog$ and a pair of expressions $e, e'$ we cannot ensure that for any safe extension $\prog'$ for $(\prog, e)$ and $(\prog, e')$ it holds that
The point is that, if extra variables are allowed, for a fixed program $\prog$ and an expression $e$ we cannot ensure that for any safe extension $\prog'$ for $(\prog, e)$ it holds that $\den{e}^\prog = \den{e}^{\prog'}$; i.e., Lemma \ref{lSafeExt} does
not hold. We cannot even grant that $\den{e}^\prog = \den{e'}^\prog$ implies that $\den{e}^{\prog'} = \den{e'}^{\prog'}$ for any safe extension $\prog'$, which in fact is what it is needed for full abstraction, and what we have exploited in the (counter-)examples above.
It is also relevant that both examples are left-FO programs, and therefore the problems do not come
from the presence of higher order patterns in function definitions. %It seems that the intensional view of functions of {\it HOCRWL} has some inherent limitations that should
%Therefore, the extension of this work to cover extra variables, and to give a more active role to variables in general, is a challenging subject of future work.

As a conclusion of this discussion, we contemplate the extension of this work to cope with extra variables as a challenging subject of future work.

\section{Conclusions and Future Work}\label{sec:conclusions}
%%%%%%%%%%%%%%%%%%%%%%%%%%%%%%%%%%%%%%%%%%%%%%%%5
% \input{conclusions}
%\todo{No basta con borrar cosas, hay q adaptarlo todo, pero los dos parrafos del final ya sobran}
% Let us start with some facts, followed by some opinions.
We have seen that reasoning extensionally in existing FLP languages with HO nondeterministic functions is
not valid in general (Ex. \ref{fdouble}, Prop. \ref{unsound}). In contrast, thinking in intensional
functions is not an arbitrary exoticism, but rather an appropriate point of view for that setting
(Th. \ref{th:fullabstraction}). We stress the fact that adopting an intensional view
of the \emph{ meaning } of functions is compatible with a discipline of programming in which
programs are restricted %obliged
 to be left-FO, that is, the use of HO-patterns in left-hand sides of program rules
is forbidden. This is the preferred choice by  some people in the FLP community, mostly because
HO-patterns in left-hand sides cause
some problems to the type system. Our personal opinion is the following: since HO-patterns appear in
the semantics even if they are precluded from programs, they could be freely permitted, at least as far as they
are compatible with the type discipline. There are quite precise works \cite{GHR99} pointing out which are
the problematic aspects, mainly \emph{opacity} of patterns. Existing systems could incorporate
restrictions, so that only type-safe uses of HO-patterns are allowed. More work could be done
along this line.

We have seen in Sect. \ref{sec:discussion} how the presence of extra variables in programs
destroys full-abstraction of the {\it HOCRWL} semantics. Recovering it for such family of programs
is an obvious subject of future work.
Another very interesting, and  somehow related matter,   is giving variables a more active role
in the semantics.
Certainly, the results in the paper are not restricted to ground expressions,
but their interest  for expressions having
variables is limited by the fact that in the notions of semantics and observations  considered in the paper,
variables are implicitly treated as generic constants.
For instance, the expressions $e_1 \equiv %double X$
X + X$ and $e_2 \equiv X+0$ do have the same semantics $\den{\_}_\perp$
($\den{e_1}_\perp  = \den{e_2}_\perp = \{\perp\}$).
%Something similar happens with the other notions of semantics given before.
Full abstraction of $\den{\_}_\perp$ ensure that
$\obs({\cal C}[e_1]) = \obs({\cal C}[e_2])$ for any context $\cal C$.
This is ok as far as one is only interested in possible reductions starting from $e_1,e_2$.
If this is the case, certainly $e_1$ and $e_2$ have equivalent behavior (no successful reduction to
a pattern can be done with any of them). However, in some sense $e_1$ and $e_2$ have different `meanings',
that are reflected in different behaviors; for instance, if $e_1$ and $e_2$ are subject to narrowing,
or if $e_1$ and $e_2$ are used as right hand sides in a program rule.

\medskip

\noindent{\bf Acknowledgments}~~We are grateful to Rafa Caballero for his intense
collaboration while developing this research.

\bibliographystyle{abbrv}

% \bibliography{../biblio}
\end{document}